\documentclass[a4paper, pra,twocolumn,showpacs,superscriptaddress,groupedaddress]{revtex4}
\usepackage{ae}
\usepackage{physics}
\usepackage[T1]{fontenc}
\usepackage[ansinew]{inputenc}
\usepackage{amsmath}
\usepackage{amssymb}
\usepackage[caption=false]{subfig}
\usepackage{multirow}
\usepackage{array}
\usepackage[]{graphicx}
\usepackage{wrapfig}
\usepackage{makecell}
\usepackage{epstopdf}
\usepackage{mathtools}
\usepackage{color}
\usepackage[colorlinks=true,linkcolor=red,citecolor=blue,urlcolor=blue]{hyperref}
\usepackage{lscape}
\usepackage{amsthm}
\newtheorem{theorem}{Theorem}

\graphicspath{ {./image/} }

\newtheorem{definition}{Definition}
\begin{document}
\title{Catalytic Enhancement of Coherence in Noisy Quantum Channels and Characterization of Strictly Incoherent Operations}
\author{Priyabrata Char}
\email{priyabrata_ipdf@iitg.ac.in, mathpriyabrata@gmail.com}
\affiliation{Department of Physics, Indian Institute of Technology Guwahati,
Guwahati, 781039, Assam, India}
\author{Dipayan Chakraborty}
\email{dipayan@ssmahavidyalaya.org.in, dipayan.tamluk@gmail.com}
\affiliation{Department of Mathematics, Sukumar Sengupta Mahavidyalaya, Keshpur, Paschim Medinipur, 721150, West Bengal, India}
\author{Indrani Chattopadhyay}
\email{icappmath@caluniv.ac.in}
\affiliation{Department of Applied Mathematics, University of Calcutta, 92 A.P.C Road, Kolkata, 700009, West Bengal, India}
\author{Debasis Sarkar}
\email{dsarkar1x@gmail.com, dsappmath@caluniv.ac.in}
\affiliation{Department of Applied Mathematics, University of Calcutta, 92 A.P.C Road, Kolkata, 700009, West Bengal, India}
\begin{abstract}
In realistic quantum information processing tasks, quantum states are inevitably affected by environmental noise, leading to decoherence and degradation of useful quantum resources. The coherence fraction, which serves as an important figure of merit for several quantum protocols, may decrease significantly after the action of a noisy channel. Such degradation can result in unsatisfactory performance in real-world applications. In this work, we investigate whether catalysis can be used to pre-process the input state to enhance the coherence fraction of an output state from a quantum channel. Specifically, we study whether using a processed state $\rho_s'$ as the input to a quantum channel $\Lambda$, instead of the original state $\rho_s$, can yield an output state $\Lambda(\rho_s')$ whose coherence fraction exceeds that of $\Lambda(\rho_s)$. We analyze the conditions under which such an improvement is possible. We also provide a practical application of our setup through the phase discrimination task. Furthermore, we establish a necessary and sufficient condition for an incoherent state preserving CPTP(Completely Positive Trace Preserving) map $\mathcal{E}$ to be a particular type of Strictly Incoherent Operation (SIO). This characterization provides a new structural understanding of SIO and clarifies its role in coherence manipulation. Our results offer practical insights into coherence preservation and enhancement in noisy quantum processes and may be useful for optimizing quantum information protocols under realistic conditions. We also provide numerical examples to support our claims. 
\end{abstract}
\date{\today}
\pacs{ 03.67.Mn, 03.65.Ud.;}
\maketitle
\section{Introduction}
Quantum Resource Theory (QRT) \cite{chitambar2019quantum, Gour_2025} provides us a unified way to study various quantum phenomena such as entanglement \cite{ekert1991quantum, bennett1992communication,bennett1993teleporting, horodecki2009quantum}, coherence \cite{baumgratz2014quantifying, aberg2006quantifying,streltsov2017colloquium}, asymmetry \cite{marvian2013theory,marvian2016quantum,marvian2014extending}, imaginarity \cite{wu2021operational, wu2021resource,wu2024resource,biswas2026entanglement}, and quantum reference frames \cite{gour2008resource, gour2009measuring}, etc. Despite varying physical motivations in defining the elements of a specific QRT, many similarities can be found when analyzing the quantification and convertibility of such resources. A general QRT divides all quantum states into two categories, free states and resource states \cite{chitambar2019quantum, Gour_2025}. Additionally, a QRT also includes a set of quantum operations that emerge from natural constraints on the considered physical system, ensuring that these operations always produce a free state when applied to a free state. These quantum operations are called free operations \cite{chitambar2019quantum, Gour_2025}. The QRT then subsequently investigates quantum information processing protocols that can be implemented using free operations and their associated resource states. 

In the resource theory of entanglement, separable states are free states, Local Operation and Classical Communication (LOCC) \cite{chitambar2014everything, bhunia2023more} are free operations, and entangled states are resource states \cite{horodecki2009quantum}. In the resource theory of coherence, we fix a reference orthonormal basis $\ket{i}$ of the Hilbert space, say $\mathcal{H}_A$, which we call the incoherent basis. A state $\rho$ is said to be incoherent if it is diagonal in this basis and called a coherence state if it is not \cite{baumgratz2014quantifying}. Different classes of free operations exist for coherence theory, such as Maximally Incoherent Operations (MIO) \cite{aberg2006quantifying}, Incoherent Operations (IO) \cite{baumgratz2014quantifying}, Strictly Incoherent Operations (SIO) \cite{winter2016operational, yadin2016general}, and Physically Incoherent Operations (PIO) \cite{chitambar2016comparison,chitambar2017erratum,chitambar2016critical}. Each of them arises by considering different physical constraints on allowable transformations, along with the constraint that they cannot generate coherence from incoherent states. Quantifying resources is a pivotal task in any resource theory \cite{chitambar2019quantum}. Several important coherence measures are $l_1$ norm of coherence \cite{baumgratz2014quantifying}, distillable coherence \cite{winter2016operational,yuan2015intrinsic}, coherence cost \cite{winter2016operational}, coherence of formation \cite{winter2016operational}, relative entropy of coherence \cite{baumgratz2014quantifying}, robustness of coherence \cite{napoli2016robustness, piani2016robustness},  coherence fraction \cite{yao2019quantum, karmakar2019coherence,lipka2021catalytic} etc.

Quantum resource theories focus on quantifying and manipulating quantum resources through free operations to enhance their utilization and optimize the performance of quantum information tasks. But in practical implementation, a quantum system is inevitably exposed to environmental interactions, leading to decoherence and degradation of valuable quantum resources \cite{barnum1998information, schumacher1998quantum} like entanglement, coherence, etc.  Quantum channels are essential for modeling the effects of environmental noise on quantum states, such as depolarizing, dephasing, amplitude damping, and bit-flip, etc \cite{nielsen2010quantum,hayashi2006quantum,gyongyosi2018survey}. These channels typically reduce the resources, thereby affecting the efficiency of quantum protocols. Hence, the properties of quantum channels naturally comes under inspection within the framework of quantum resource theories. 

Mathematically, a quantum channel is a Completely Positive and Trace Preserving (CPTP) map $\Lambda: \mathcal{B}(\mathcal{H}_A) \to \mathcal{B}(\mathcal{H}_B) $. The completely positive property ensures that $\Lambda$ is a physical map whereas trace preserving ensures that the trace of the density matrix is preserved.  Any such quantum channel $(\Lambda)$ admits the following Kraus operator representation \cite{kraus1983states,hellwig1970operations,choi1975completely}
$$
\Lambda(\rho) = \sum_n K_n \rho K_n^\dagger, \quad \sum_n K_n^\dagger K_n = I,
$$
where $\{K_n\}$ are Kraus operators. Equivalently, quantum channels can also be described by the Stinespring dilation theorem \cite{stinespring1955positive}. The action of a quantum channel $\Lambda$ on a state $\rho$ can be described as joint unitary $(U)$ evolution of the main system and an environment system, followed by a partial trace over the environment system to the resultant state 
$$\Lambda(\rho)=\text{Tr}_E(U(\rho_s\otimes\ket{0}\bra{0}_E)U^{\dagger}).$$ This provides a clear physical interpretation of quantum channels as a unitary evolution of a larger, closed system consisting of the primary system plus an environment system, where we choose to ignore part of it.

The change in a state's resource content under a noisy quantum channel can be quantified by comparing the input and output state via a suitable resource quantifier \cite{takahashi2022creating, pal2014entanglement, mani2015cohering, kumar2025fidelity}. Naturally, the question arises whether it is possible to stop, or at least restrict, such resource degradation to some extent \cite{wakamura2017state, suter2016colloquium}. Understanding the behavior of the resource quantifier is crucial for designing strategies to protect the resource in the presence of noise. One promising solution may be provided by the use of quantum catalysis \cite{jonathan1999entanglement,aaberg2014catalytic,datta2023catalysis,lipka2024catalysis}. A catalyst is an auxiliary quantum state that can enable an impossible state transformation \cite{jonathan1999entanglement, bu2016catalytic, kondra2021catalytic, char2023catalytic, takagi2022correlation}, or enhance a quantum information processing task \cite{lipka2021catalytic, char2024boosting}, all while remaining unchanged and not being consumed in the process. Catalysis can be categorized as uncorrelated catalysis \cite{jonathan1999entanglement,aaberg2014catalytic,bu2016catalytic} and correlated catalysis \cite{char2024boosting,kondra2021catalytic,lipka2021catalytic, takagi2022correlation, shiraishi2021quantum} depending on the correlation between the post-transformation main system and the catalytic system. The uncorrelation constraint is a very powerful constraint, and in a practical scenario, such a condition may not always be fulfilled. The concept of correlated catalysis not only unlocks a broader set of achievable transformations than uncorrelated catalysis in quantum resource theories but also offers a more ideal framework for practical experiments \cite{char2024boosting,kondra2021catalytic,lipka2021catalytic, takagi2022correlation, shiraishi2021quantum,datta2024entanglement,bavaresco2025catalytic,lipka2025finite,ding2021amplifying,zhang2024experimental}.

It is natural to investigate whether catalytic techniques can be useful to enhance or preserve quantum coherence in the presence of noise. In this work, we explore when catalytic techniques can be employed to preprocess input states so that the processed state, after passing through a noisy quantum channel, exhibits an enhanced coherence fraction. We specifically consider coherence fraction because it often arises in various quantum information protocols, where the protocol's performance metric \cite{napoli2016robustness, karmakar2023coherence, grondalski2002fully, bu2017maximum, horodecki1999general} is associated with it. In the past, it was also observed that catalysis processes have a positive influence on such quantum information protocols \cite{kondra2021catalytic, char2024boosting}. Section \ref{sec:Priliminaries} contains the necessary preliminaries. In section \ref{sec:catalytic advantage}, we have discussed how and when catalysis can increase coherence fraction in the presence of a noisy channel and an application of this concept to a quantum information protocol. In section \ref{sec:Example} and \ref{sec:example 1} we have provided two examples of our catalytic scheme. Section \ref{sec:SIO} contains a necessary and sufficient criterion for an MIO to be an SIO. Finally, we conclude our findings in section \ref{sec:conclusion}.

\section{Preliminaries}
\label{sec:Priliminaries}
To start, we will explore several key coherence metrics that are essential for this work's context.

$l_1$ norm of coherence measures the total amount of coherence present in the state \cite{baumgratz2014quantifying} and defined as 
\begin{equation}
    C_{l_1}(\rho)= \min_{\sigma\in \mathcal{I}}\parallel\rho-\sigma\parallel_{l_1}=\sum_{i\neq j}|\rho_{ij}|.
    \label{eq:l1 norm}
\end{equation}
where $\mathcal{I}$ is the set of incoherent states.

Another measure,  the robustness of coherence \cite{napoli2016robustness,piani2016robustness}, assesses the extent to which an incoherent state $\sigma$ must be combined with a specified state $\rho$ to remove its coherence completely. The larger the robustness of coherence, the more the system can withstand being changed into an incoherent state through free operations. It is defined as 
\begin{equation}
    C_R(\rho)=\min_{\sigma\in \mathcal{I}}\{x\geq0|\frac{\rho+x\sigma}{1+x}\in \mathcal{I}\}.
    \label{eq:robustness of coherence}
\end{equation}
%The robustness of coherence plays a significant role in various coherence manipulation tasks like phase discrimination \cite{phase_discrimination1, phase_discrimination2} and sub channel discrimination \cite{subchannel_discrimination} game.
The coherence fraction measures the maximum overlap between a quantum state and a maximal coherent state \(\ket{\phi}\) \cite{yao2019quantum, karmakar2019coherence,lipka2021catalytic} within a fixed basis. In the case of a single-party system, it is defined as 
\begin{equation}
    C_F(\rho)=\max_{U} \bra{\phi}U\rho U^{\dagger}\ket{\phi}=\max_{\Lambda\in \text{IO}}\bra{\phi}\Lambda(\rho)\ket{\phi},
    \label{eq:coherence fraction for single party}
\end{equation}
where $U$ includes all unitary operations that are incoherent. 
%For a bipartite (or distributed) scenario, let $\ket{\chi}$ denote the maximally coherent state, and $\bar{U}=U_A\otimes U_B$, where $U_A,\; U_B$ are local incoherent unitary operations. In this situation, the coherence fraction \cite{coherence_fraction2,catalytic_teleportation} can be defined as
%\begin{equation}
%    C_F(\rho)=\max_{\bar{U}} \bra{\chi}\bar{U}\rho \bar{U}^{\dagger}\ket{\chi}=\max_{\bar{\Lambda}\in \text{LICC}}\bra{\chi}\bar{\Lambda}(\rho)\ket{\chi}.
 %   \label{eq:coherence fraction for multi party}
%\end{equation}
For maximally coherent states in $d$ dimensions \cite{baumgratz2014quantifying}, the coherence fraction equals $1$, while for any incoherent state of dimension $d$, it is $1/d$ \cite{yao2019quantum, karmakar2019coherence}. 

The next theorem states the close connection of coherence fraction in a single-party scenario with $l_1$ norm coherence and robustness of coherence.

\begin{theorem}
   \cite{yao2019quantum} For a $d$- dimensional state $\rho$ 
   \begin{equation}
  C_{F}(\rho)\leq\frac{1+C_R(\rho)}{d}  ,
  \label{eq:fraction with robustness inequality}
\end{equation}  
and if  $\rho$ can be transferred to $\Tilde{\rho}=U^{\dagger}\rho U$ with entries $\tilde{\rho}_{ij}=|\rho_{ij}|$, where $U$ is an incoherent unitary operator, then 
\begin{equation}
  C_{F}(\rho)=\frac{1+C_{l_1}(\rho)}{d}=\frac{1+C_R(\rho)}{d}.  
  \label{eq:fraction with robustness}
\end{equation} 
\end{theorem}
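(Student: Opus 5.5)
The plan is to prove the inequality first and then the equality case. Throughout, $\ket{\phi}=\frac{1}{\sqrt d}\sum_i\ket{i}$ denotes the maximally coherent state.

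\emph{Inequality.} Since $C_R$ is a minimum over incoherent $\sigma$, fix an optimal pair $x=C_R(\rho)$ and $\sigma\in\mathcal{I}$. Then $\delta:=\frac{\rho+x\sigma}{1+x}$ is incoherent, and hence $\rho=(1+x)\delta-x\sigma\le(1+x)\delta$ in the operator order. Now take any incoherent unitary $U$, which is a permutation composed with a diagonal phase matrix. It maps diagonal states to diagonal states, so $U\delta U^\dagger$ is diagonal. Monotonicity of expectation values under the operator order then gives
\[
\bra{\phi}U\rho U^\dagger\ket{\phi}\le(1+x)\bra{\phi}U\delta U^\dagger\ket{\phi}=(1+x)\tfrac{1}{d}\,\mathrm{Tr}\,\delta=\tfrac{1+x}{d}.
\]
Here we used that $\bra{\phi}D\ket{\phi}=\mathrm{Tr}(D)/d$ for any diagonal $D$. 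Maximizing over $U$ yields Eq.~\eqref{eq:fraction with robustness inequality}. If the paper's maximum is instead over IO maps $\Lambda$, the same argument works, because $\Lambda$ is positive and maps $\delta$ to an incoherent state.

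\emph{Equality case.} Suppose $\tilde\rho=U^\dagger\rho U$ has entries $|\rho_{ij}|$. Choosing the incoherent unitary $U^\dagger$ in the maximization gives the lower bound
\[
C_F(\rho)\ge\bra{\phi}\tilde\rho\ket{\phi}=\tfrac1d\sum_{ij}|\rho_{ij}|=\tfrac{1+C_{l_1}(\rho)}{d},
\]
using $\mathrm{Tr}\rho=1$.

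It then suffices to show the chain $C_{l_1}(\rho)\ge C_R(\rho)$, which closes the loop $\frac{1+C_{l_1}}{d}\le C_F\le\frac{1+C_R}{d}\le\frac{1+C_{l_1}}{d}$ and forces equalities throughout. The inequality $C_R\le C_{l_1}$ is known to hold for every state. I expect this to be the main obstacle if it is not simply cited. To prove it directly, I would try the witness or dual formulation of robustness: $C_R(\rho)=\max\{-\mathrm{Tr}(W\rho): W\ge -I,\ \Delta(W)=0\}$ (up to normalization conventions). Any such $W$ has zero diagonal and satisfies $W\ge -I$, so its off-diagonal entries obey $|W_{ij}|\le1$, because $2\times2$ principal minors of $W+I$ must be positive semidefinite. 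Hence $-\mathrm{Tr}(W\rho)\le\sum_{i\ne j}|W_{ij}||\rho_{ij}|\le C_{l_1}(\rho)$.

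Alternatively, one can avoid duality by exhibiting an explicit mixing state built from $\tilde\rho$: take
\[
\sigma\propto\sum_{i\neq j}|\rho_{ij}|\,\bigl(\ket{i}\bra{i}+\ket{j}\bra{j}\bigr)/2\cdot 2,
\]
and check positivity of $\rho+x\sigma$ with $x=C_{l_1}$, showing it is diagonal plus a matrix of $2\times2$ blocks of the form $|\rho_{ij}|\begin{pmatrix}1&e^{i\theta}\\ e^{-i\theta}&1\end{pmatrix}$. That alternative gives an upper bound mixing that is positive but not incoherent, so the dual route is the cleaner one. The inequality $C_R\le C_{l_1}$ yields the equalities in Eq.~\eqref{eq:fraction with robustness}.
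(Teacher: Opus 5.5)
The paper gives no proof of this statement; it quotes it from \cite{yao2019quantum}. There is therefore no in-paper route to compare with, and your argument has to stand on its own. It does. The bound $\rho\le(1+x)\delta$, combined with $\bra{\phi}D\ket{\phi}=\Tr(D)/d$ for diagonal $D$, gives the inequality for both the unitary and the IO form of $C_F$. The sandwich $\frac{1+C_{l_1}}{d}\le C_F\le\frac{1+C_R}{d}\le\frac{1+C_{l_1}}{d}$ then closes the equality case. Your first step uses only $\sigma\ge0$, which is what is needed. For the statement to hold, the mixing state in the robustness must range over \emph{all} states, not over $\mathcal{I}$ as Eq.~\eqref{eq:robustness of coherence} literally says. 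Adding an incoherent state never cancels the off-diagonal entries of $\rho$, so under the literal reading $C_R=\infty$ for every coherent $\rho$ and the equality would be false.

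The step $C_R\le C_{l_1}$ needs two repairs.

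\emph{The dual has the constraint reversed.} Write $1+C_R(\rho)=\min\{\Tr D: D \text{ diagonal},\ D\ge\rho\}$. Strong duality (Slater holds, e.g.\ with $D=2I$) gives $1+C_R(\rho)=\max\{\Tr(\rho Y): Y\ge0,\ Y_{ii}=1\}$, i.e.\ $C_R(\rho)=\max\{-\Tr(W\rho): W\le I,\ \Delta(W)=0\}$. With $W\ge -I$ in place of $W\le I$, the value is wrong for $d\ge3$. For $\rho=\ket{\phi}\bra{\phi}$ in $d=3$, your expression is at most $-\lambda_{\min}(W)\le1$, whereas $C_R=2$. The entrywise bound $|W_{ij}|\le1$ follows from $I-W\ge0$ just as well, so your conclusion survives. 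Note that you genuinely need strong duality here, because weak duality only bounds $C_R$ from below.

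\emph{The explicit construction you abandoned works once the roles are swapped.} The mixing state may be coherent; it is the mixture that must be diagonal. Let $D=\mathrm{diag}\bigl(\rho_{ii}+\sum_{j\ne i}|\rho_{ij}|\bigr)$. Then $D-\rho$ is Hermitian with a nonnegative, dominant diagonal, hence positive semidefinite by Gershgorin, and $\Tr(D-\rho)=C_{l_1}(\rho)$. So $\tau=(D-\rho)/C_{l_1}(\rho)$ is a state, and $\frac{\rho+C_{l_1}(\rho)\,\tau}{1+C_{l_1}(\rho)}=\frac{D}{1+C_{l_1}(\rho)}$ is incoherent. This gives $C_R\le C_{l_1}$ directly, with no SDP duality. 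Your diagonal $\sigma$ failed only because adding a diagonal matrix cannot touch the off-diagonal entries of $\rho$.
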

However, for a 2-dimensional state $\rho$, the following relation holds always,\cite{karmakar2019coherence}
    \begin{equation}
        C_{F}(\rho)=\frac{1+C_R(\rho)}{2}.
        \label{eq:fraction with robustness for qubit}
    \end{equation} 
Next, we will discuss various free operations of coherence resource theory.  Several operational classes are commonly considered, each imposing a different level of restriction on the admissible operations.

\textbf{Maximally Incoherent Operations}: A completely positive trace-preserving map $\mathcal{E}$ (CPTP map) is called a Maximally Incoherent Operation (MIO) \cite{aberg2006quantifying} if it maps every incoherent state to an incoherent state, i.e.,

\begin{equation}
\mathcal{E}(\delta) \in \mathcal{I} \qquad \text{for all } \delta \in \mathcal{I},
\end{equation}
where \(\mathcal{I}\) denotes the set of incoherent states.

\textbf{Incoherent Operations}: A completely positive trace-preserving map $\mathcal{E}$ is called an Incoherent Operation (IO) \cite{baumgratz2014quantifying} if it admits a Kraus representation
\begin{equation}
\Lambda(\rho) = \sum_n K_n \rho K_n^\dagger, \quad \sum_n K_n^\dagger K_n=I.
\end{equation}
such that each Kraus operator $K_n$ individually preserves incoherence state,
\begin{equation}
K_n \mathcal{I} K_n^\dagger \subseteq \mathcal{I}.
\end{equation}
Each Kraus operator $K_n$ for IO has at most one non-zero element in each column.

\textbf{Strictly Incoherent Operations}: A CPTP map $\mathcal{E}$ is said to be a Strictly Incoherent Operation (SIO) \cite{winter2016operational, yadin2016general} if it admits a Kraus representation 
\begin{equation}
\Lambda(\rho) = \sum_n K_n \rho K_n^\dagger, \quad \sum_n K_n^\dagger K_n=I.
\end{equation}
such that both $K_n$ and $K_n^\dagger$ preserve incoherent states. That is, 
\begin{equation}
    K_n \mathcal{I} K_n^\dagger \subseteq \mathcal{I},
    \qquad 
    K_n^\dagger \mathcal{I} K_n \subseteq \mathcal{I}.
\end{equation}
In the case of an SIO, each Kraus operator $K_n$ has at most one nonzero entry in every row and every column.

Next, we will define a quantum channel important to this work and a theorem related to its Kraus operator structure.

\textbf{Schur Multiplier Channel}: A linear map $\mathcal{E}: \mathcal{B(H)}\rightarrow\mathcal{B(H)}$ is called a Schur multiplier channel with respect to a fixed orthonormal basis $\{\ket{i}\}_{i=1}^d$ if it acts entrywise as 
$$\mathcal{E}(X)=C \odot X=[c_{ij}x_{ij}]_{d\times d}$$
where $X=[x_{ij}]_{d \times d} \in \mathbb{C}^{d \times d}$ and $C=[c_{ij}]_{d \times d} \in \mathbb{C}^{d \times d}$ and $C \succeq 0$ with $C_{ii}=1$ for all $i$ \cite{Watrous2018}.

\begin{theorem}
\label{th:Dipayan's 1st theorem}
A quantum channel $\mathcal{E}$ is a Schur multiplier channel in a given basis if and only if it admits a Kraus decomposition 
$$\mathcal{E}(X)=\sum_n K_n X K_n^\dagger,\quad \sum_n K_n^\dagger K_n=I,$$
where each Kraus operator $K_n$ is diagonal \cite{Watrous2018,Levick2018} in that basis and from this representation it is obvious that $\mathcal{E}$ is a SIO.
\end{theorem}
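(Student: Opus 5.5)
The plan is to prove both directions by comparing the Choi matrix of $\mathcal{E}$ with the Schur multiplier matrix $C$, and then to read off the SIO property from the diagonal Kraus form.

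\emph{Diagonal Kraus $\Rightarrow$ Schur multiplier.} Suppose $\mathcal{E}(X)=\sum_n K_n X K_n^\dagger$ with $K_n=\mathrm{diag}(k^{(n)}_1,\dots,k^{(n)}_d)$ and $\sum_n K_n^\dagger K_n=I$. Computing entrywise gives
\[
\langle i|\mathcal{E}(X)|j\rangle=\sum_n k^{(n)}_i x_{ij}\overline{k^{(n)}_j}=c_{ij}x_{ij},
\qquad c_{ij}:=\sum_n k^{(n)}_i\overline{k^{(n)}_j}.
\]
Writing $v_n=(k^{(n)}_1,\dots,k^{(n)}_d)^T$, we get $C=\sum_n v_nv_n^\dagger\succeq 0$. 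Moreover $c_{ii}=\sum_n|k^{(n)}_i|^2=\langle i|\sum_nK_n^\dagger K_n|i\rangle=1$. So $\mathcal{E}$ is a Schur multiplier channel.

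\emph{Schur multiplier $\Rightarrow$ diagonal Kraus.} Given $C\succeq 0$ with unit diagonal, take a decomposition $C=\sum_n v_nv_n^\dagger$, for instance the spectral one $v_n=\sqrt{\lambda_n}\,u_n$, or a Cholesky factorization. Set $K_n=\mathrm{diag}(v_n)$. Reversing the computation above gives $\sum_nK_nXK_n^\dagger=C\odot X$. Trace preservation holds because $\sum_nK_n^\dagger K_n=\mathrm{diag}(c_{11},\dots,c_{dd})=I$.

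This direction also shows that the definition already makes the map CPTP. Alternatively, note that the Choi matrix $\sum_{ij}|i\rangle\langle j|\otimes C\odot|i\rangle\langle j|$ is, after restriction to $\mathrm{span}\{|ii\rangle\}$, just $C$, so complete positivity is equivalent to $C\succeq 0$. I could use this to justify necessity of $C\succeq0$ if the definition were read as requiring only that $\mathcal{E}$ be a channel acting entrywise. In that reading the key step is: for a channel of the form $X\mapsto C\odot X$, the Choi matrix is supported on $\mathrm{span}\{|ii\rangle\}$ with block equal to $C$. Hence every Kraus operator obtained from it, via eigenvectors in that span, is automatically diagonal. Trace preservation then forces $c_{ii}=1$.

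\emph{SIO conclusion.} A diagonal $K_n$ has at most one nonzero entry in each row and each column, and so does $K_n^\dagger$. For incoherent $\delta=\sum_i p_i|i\rangle\langle i|$, both $K_n\delta K_n^\dagger=\sum_i p_i|k^{(n)}_i|^2|i\rangle\langle i|$ and $K_n^\dagger\delta K_n$ are diagonal. Hence $\mathcal{E}$ is an SIO.

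The only mildly delicate point is the converse direction: one must make sure the Kraus operators can be chosen diagonal. This is handled by factoring $C$ itself rather than starting from an arbitrary Kraus decomposition. Any other Kraus decomposition is related to this one by an isometry, i.e.\ consists of linear combinations of diagonal operators, and is therefore also diagonal.
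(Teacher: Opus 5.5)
Your proof is correct and complete. Its route differs from the paper's only in that the paper gives no argument for the equivalence at all. It cites Watrous and Levick for it and adds just the one-line remark that diagonal Kraus operators have at most one nonzero entry per row and column. That remark is exactly your SIO paragraph.

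What you supply is the standard argument behind those citations:
\begin{itemize}
\item The Choi matrix of $X\mapsto C\odot X$ is $\sum_{ij}c_{ij}|ii\rangle\langle jj|$, so complete positivity is equivalent to $C\succeq 0$.
\item Any factorization $C=\sum_n v_nv_n^\dagger$ gives diagonal Kraus operators $K_n=\mathrm{diag}(v_n)$.
\item Trace preservation is equivalent to $c_{ii}=1$.
\end{itemize}
The citation buys brevity; your version makes the statement self-contained.

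Your closing observation is a genuine strengthening of ``admits'', though it is not needed for the SIO conclusion. It says that every Kraus decomposition of such a channel is diagonal, because the range of the Choi matrix lies in $\mathrm{span}\{|ii\rangle\}$, or equivalently by the unitary freedom of Kraus representations.

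Your Choi-matrix computation also justifies a step the paper later uses without proof. In the characterization theorem of Section~\ref{sec:SIO}, the paper asserts that $M=[c_{ij}]$ is positive semidefinite once $\mathcal{E}(|i\rangle\langle j|)=c_{ij}|\pi(i)\rangle\langle\pi(j)|$. This follows from your computation: $\rho\mapsto U^\dagger\mathcal{E}(\rho)U$ equals $\rho\mapsto M\odot\rho$ and is completely positive, so $M\succeq 0$.
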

Note that every Schur multiplier channel is an SIO, but not every SIO is a Schur multiplier channel.

%\textbf{Physically Incoherent Operations}: A CPTP map $\mathcal{E}$ is called a Physically Incoherent Operation (PIO) \cite{chitambar2016comparison,chitambar2017erratum,chitambar2016critical} if it can be implemented using only incoherent ancilla states, incoherent unitaries, incoherent measurements and classical post processing. Equivalently, PIO consists of channels whose kraus operators arise from such physical implementations.

   % PIO is defined by a CPTP map which can be expressed as a convex combination of maps each having Kraus operators $K_j$ of the form 

  %  $$K_j=U_jP_j=\sum_{x} e^{i \theta_x} \ket{\pi_j (x)}\bra{x} P_j$$
    
%Quantum channels are linear maps between density operators that preserves trace and positivity. Formally, a map $\Lambda: \mathcal{B}(H_A) \rightarrow \mathcal{B}(H_B)$ is a quantum channel if it is CPTP. Physical realizability is ensured by Complete Positivity while total probability normalized is kept by Trace Preserving condition.

%Any quantum channel admits a decomposition $\Lambda(\rho)=\sum_{n} K_n \rho K_n^\dagger$, where $\{K_n\}$ are Kraus operators satisfying $\sum_n K_n^\dagger K_n=I$. The operator sum form arises from following dilation \\
%The system interacts unitarily with an environment before tracing out the latter. \\
%Some important quantum channels are the Depolarizing channel, the Phase damping channel, the Bit-flip channel, the Amplitude damping channel, and the Quantum addition channel.

\section{Catalytic advantage in single-party scenario}
\label{sec:catalytic advantage}
In this section, we will first define the catalytic IO and then state a theorem that establishes a connection between catalytic and asymptotic IO transformations. Later in this section, we will present our two key results regarding catalysis and the quantum channel. 
\subsection{Catalytic IO transformation}
\begin{definition}
    Consider a single-party system `\text{s}' and a catalytic system c. A catalytic incoherent operation (IO) on a state $\rho_s$ of this system is defined as $\rho_s\rightarrow \lim\limits_{n\to\infty} tr_c [\mathcal{E}_n(\rho_s\otimes \tau_{c}^{n})]$, with the following conditions \\
    (i) $\tr_s [\mathcal{E}_n(\rho_s\otimes \tau_{c}^{n})]=\tau_{c}^{n}, \forall n$ and\\ (ii)$\lim\limits_{n\to\infty}\parallel\mu_{sc}^{n}-\sigma_{s}\otimes\tau_{c}^{n}\parallel_1=0$,\\ where $\{\tau_{c}^{n}\}$ is a sequence of catalyst states, $\{\mathcal{E}_n\}$ is a sequence of IO,  $\mu_{sc}^{n}=\mathcal{E}_n(\rho_s\otimes \tau_{c}^{n})$, and $\sigma_s$ is the target state where Alice possesses both $s$ and $c$ \cite{char2023catalytic,char2024boosting}.
\end{definition}  
\begin{theorem}
\label{th:catalytic IO}
     If it is possible to convert a single-party state $\rho_{s}$ into $\sigma_s$ using asymptotic IO at a unit rate, then it follows that there is a catalytic IO capable of transforming $\rho_{s}$ into $\sigma_s$ \cite{char2023catalytic,char2024boosting}.
\end{theorem}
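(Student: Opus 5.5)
The approach is the standard one for correlated catalysis, following Kondra--Datta--Streltsov and its coherence version in \cite{char2023catalytic,char2024boosting}. We build the catalyst from the asymptotic protocol itself. By assumption there is a sequence of IO $\Lambda_n$ acting on $n$ copies with
\[
\epsilon_n := \bigl\|\Lambda_n(\rho_s^{\otimes n})-\sigma_s^{\otimes n}\bigr\|_1 \to 0 .
\]
For each $n$ we will define a catalyst on $n-1$ copies of the system together with a classical incoherent register $K$ taking values in $\{1,\dots,n\}$:
\[
\tau_c^n=\frac1n\sum_{k=1}^{n}\rho_s^{\otimes (k-1)}\otimes \mu_{n-k}\otimes\ket{k}\bra{k}_K .
\]
Here $\mu_{n-k}$ is the marginal on the last $n-k$ subsystems of $\Lambda_n(\rho_s^{\otimes n})$. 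The register $K$ is diagonal in the incoherent basis, so conditioning on it is compatible with IO.

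The operation $\mathcal{E}_n$ will be assembled from three free steps. First, controlled on $K=k$, we place the fresh system $\rho_s$ in slot $k$. When $k=n$ we then apply $\Lambda_n$ to the $n$ collected copies of $\rho_s$; for $k<n$ nothing further is done. Second, we apply a cyclic shift $k\to k+1 \pmod n$ on the register. Third, we permute subsystems so that the designated output slot becomes the system $s$.

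Each piece is IO. Controlled applications of IO conditioned on an incoherent classical register are IO. Permutations of subsystems are incoherent unitaries. The shift of $K$ is a permutation of basis states. Hence $\mathcal{E}_n$ is IO.

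The next step is to verify condition (i), that the catalyst marginal is returned exactly. This is a telescoping bookkeeping computation: after the shift, the state carried by branch $k$ is the one that branch $k+1$ held before, and the branch $k=n$ is refreshed by $\Lambda_n(\rho_s^{\otimes n})$. One then reads off $\tr_s \mathcal{E}_n(\rho_s\otimes\tau_c^n)=\tau_c^n$.

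The system marginal is the uniform average of the single-site marginals of $\Lambda_n(\rho_s^{\otimes n})$. By monotonicity of trace distance under partial trace, each of these is $\epsilon_n$-close to $\sigma_s$. Hence the system marginal tends to $\sigma_s$.

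The main obstacle is condition (ii), which demands that the joint state $\mu^n_{sc}$ be close to $\sigma_s\otimes\tau_c^n$, not merely that the marginals be correct. The residual correlations between the output slot and the rest of the catalyst are controlled by $\epsilon_n$. To handle them, I would first compare $\mu^n_{sc}$ with the state obtained by replacing $\Lambda_n(\rho_s^{\otimes n})$ with $\sigma_s^{\otimes n}$ everywhere in the construction. Since $\mathcal{E}_n$ and the catalyst depend linearly (or contractively) on this input, the two states are $O(\epsilon_n)$-close in trace norm. In the idealized state the output slot is exactly $\sigma_s$ and is in tensor product with everything else, so the idealized state equals $\sigma_s\otimes\tilde\tau_c^n$ with $\|\tilde\tau_c^n-\tau_c^n\|_1=O(\epsilon_n)$. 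The triangle inequality then gives $\|\mu^n_{sc}-\sigma_s\otimes\tau^n_c\|_1\le O(\epsilon_n)\to0$.

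Taking the limit of $\tr_c\mu^n_{sc}$ yields $\sigma_s$. Because (i) holds exactly for every $n$, the definition of catalytic IO is met.
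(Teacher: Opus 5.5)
Your proposal is correct and uses essentially the same construction as the paper. The paper builds the same Kondra--Datta--Streltsov-type catalyst from $\rho_s^{\otimes(k-1)}$, the marginals of the asymptotic output, and a classical register $\ket{k}$. It then applies the asymptotic IO only in the branch $k=n$, cyclically shifts the register, and permutes subsystems. Your comparison with the idealized state, obtained by substituting $\sigma_s^{\otimes n}$ for $\Lambda_n(\rho_s^{\otimes n})$, derives explicitly the bound $\|\mu_{sc}^n-\sigma_s\otimes\tau_c^n\|_1\le 2\epsilon_n$, which the paper only asserts.
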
 
Let Alice (A) possess the whole system $s$. As state conversion from $\rho_s$ to $\sigma_s$ is possible via asymptotic IO with unit rate, then depending on arbitrarily chosen $\epsilon>0, \exists n\in \mathbb{N}$ and an IO transformation $\mathcal{E}$ such that $\mathcal{E}[(\rho_s)^{\otimes n}]=\Gamma$ with $D(\Gamma,(\sigma_s)^{\otimes n})<\epsilon$ where $D$ is a trace-1 norm.
Considering the catalytic state
$$\tau_c=\frac{1}{n}\sum_{k=1}^{n}\underbrace{\rho^{\otimes (k-1)}\otimes\Gamma_{n-k}}_{A}\otimes \underbrace{\ket{k}\bra{k}}_{R},$$
where $\Gamma_i=\text{tr}_{A_1A_2...A_{n-i}}(\Gamma)$, $\Gamma_0=I$ and $c=A\otimes R $ then,
$$\rho_s\otimes\tau_c=\frac{1}{n}\sum_{k=1}^{n}\rho^{\otimes k}\otimes\Gamma_{n-k}\otimes \ket{k}\bra{k}.$$
Next, we describe the catalytic IO process in three steps.

$\mathbf{Step-1:}$ Alice carries out a rank 1 projective measurement using the basis $\ket{k}$. If Alice receives the result $n$, she will perform the IO operation $\mathcal{E}$, if not, she will take no action, and the resulting state will be
$$\mu_{sc}'=\frac{1}{n}\sum_{k=1}^{n-1}\rho^{\otimes k}\otimes\Gamma_{n-k}\otimes\ket{k}\bra{k}+\frac{1}{n}\Gamma\otimes\ket{n}\bra{n}.$$
$\mathbf{Step-2:}$ Alice subsequently applies an incoherent unitary operation to the auxiliary system of $\mu_{sc}'$, that sends $\ket{i}$ to $\ket{i+1}$ and $\ket{n}$ to $\ket{1}$. Her resultant state will be
$$\mu_{sc}''=\frac{1}{n}\sum_{k=1}^{n}\rho^{\otimes (k-1)}\otimes\Gamma_{n+1-k}\otimes\ket{k}\bra{k}.$$
$\mathbf{Step-3:}$ Alice now employs a swap unitary that sends $s_i$ to $s_{i+1}$ and $s_n$ to $s_1$, after which $\mu_{sc}''$ transforms to the final state $\mu_{sc}$, such that $D(\mu_{sc},\sigma_s\otimes\tau_c)<2\epsilon$ and $\text{tr}_s(\mu_{sc})=\tau_c$ and 
\begin{equation}
    \text{tr}_c(\mu_{sc})=\frac{1}{n}\sum_{i=1}^{n}\text{tr}_{/i}\mathcal{E}(\rho_s^{\otimes n})=\rho_{s}',
    \label{eq:final state}
\end{equation} where $\text{tr}_{/i}(\rho_{s_1s_2s_3...s_i...s_n})=\rho_{s_i}$.\\
Later in this manuscript, we call this $\rho_s'$ in equation \eqref{eq:final state} the processed state.
\subsection{Increasing the coherence fraction of the output state from a quantum channel.}
In many quantum information processing tasks, one particle must pass through a quantum channel $\Lambda$, which can introduce impurities into the state, resulting in changes to the coherence fraction of the output state compared to the input state $\rho_s$. This may lead to a problem, as the coherence fraction acts as a figure of merit for many protocols. A lower coherence fraction can lead to non-satisfactory results or such protocols. We now investigate whether using the processed state $\rho_s'$ (see equation \eqref{eq:final state}) made by the catalytic process in theorem \ref{th:catalytic IO} as the input to the channel $\Lambda$, instead of the original state $\rho_s$, can result in an output state $\Lambda(\rho_s')$ whose coherence fraction exceeds that of the original output state $\Lambda(\rho_s)$.

Here we consider the case where the coherence fraction of the output state $\Lambda(\rho_s)$ from channel $\Lambda$ decreases compared to the coherence fraction of the input state $\rho_s$, that is
\begin{equation}
    \label{eq:3rd inequality}
    C_F(\rho_s)\geq C_F(\Lambda(\rho_s)).
\end{equation}
\begin{figure}[ht]
\centering
\includegraphics[width=0.45\textwidth]{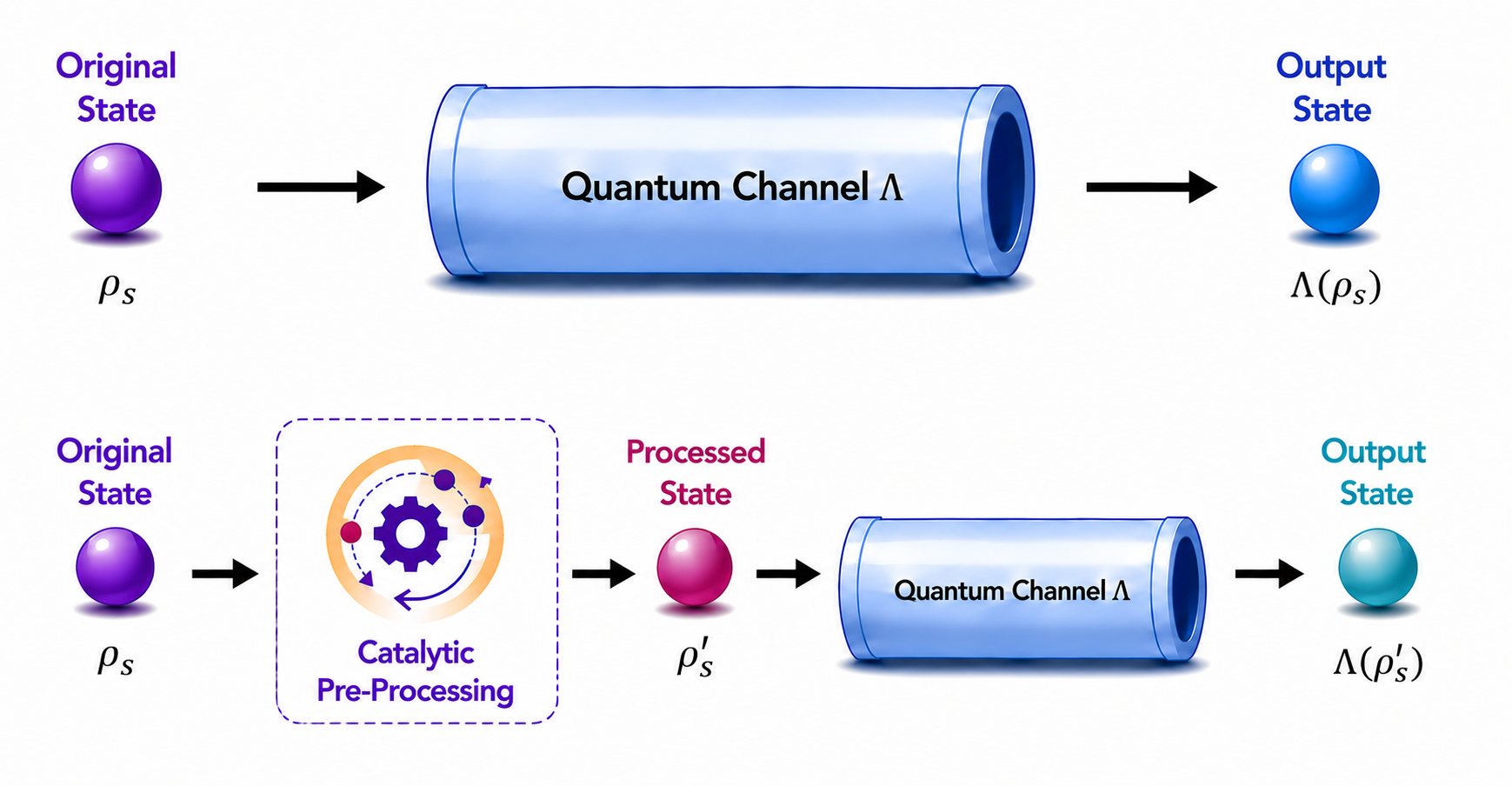}
\caption{Schematic diagram of the setup for this work}
\label{fig:channel picture}
\end{figure}
\begin{theorem}
\label{th:main 2nd theorem}
   By the catalytic process in theorem \ref{th:catalytic IO}, we have transferred the original input state $\rho_s$ to a processed state $\rho_s'$ (see equation \eqref{eq:final state}). If for a channel $\Lambda$ and IO $\mathcal{E}$, $C_F(\Lambda(\mathcal{E}(\rho_s)))=C_F(\mathcal{E}(\Lambda(\rho_s)))$ then 
   \begin{equation}
    \label{eq:main inequality}
    \lim_{n\to \infty}\max_{\mathcal{E}\in\text{IO}}C_{F}(\Lambda(\rho_s'))\geq C_{F}(\Lambda(\rho_s)).
\end{equation}
\end{theorem}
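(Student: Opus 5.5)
The plan is to show that the maximum over $\mathcal{E}\in\text{IO}$ is at least the value obtained from one particular choice of $\mathcal{E}$ that reproduces the unprocessed situation in the limit. The simplest candidate is the identity channel on $n$ copies, $\mathcal{E}=\mathrm{id}^{\otimes n}$, which is an incoherent operation (even an incoherent unitary). With this choice we have $\Gamma=\rho_s^{\otimes n}$ exactly, so $\epsilon=0$ in the construction after Theorem \ref{th:catalytic IO}. Then every marginal $\text{tr}_{/i}\mathcal{E}(\rho_s^{\otimes n})$ equals $\rho_s$, and by equation \eqref{eq:final state} the processed state is $\rho_s'=\rho_s$ for every $n$. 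Consequently
\begin{equation*}
\max_{\mathcal{E}\in\text{IO}}C_F(\Lambda(\rho_s'))\geq C_F(\Lambda(\rho_s))
\end{equation*}
holds for each $n$, and taking $n\to\infty$ (or the $\liminf$, if the limit is not known to exist) gives \eqref{eq:main inequality}.

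To use the commutation hypothesis and obtain a genuinely informative statement, I would also carry out a second route. Restrict to single-copy-type processings $\mathcal{E}_n=\mathcal{E}^{\otimes n}$ with $\mathcal{E}\in\text{IO}$. Each marginal of $\mathcal{E}^{\otimes n}(\rho_s^{\otimes n})$ is $\mathcal{E}(\rho_s)$, so $\rho_s'=\mathcal{E}(\rho_s)$.

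By hypothesis, $C_F(\Lambda(\mathcal{E}(\rho_s)))=C_F(\mathcal{E}(\Lambda(\rho_s)))$. By the second form of the definition \eqref{eq:coherence fraction for single party}, $C_F(\sigma)=\max_{\Lambda'\in\text{IO}}\bra{\phi}\Lambda'(\sigma)\ket{\phi}$. IO is closed under composition, so $C_F$ is monotone under IO. This alone only gives $C_F(\mathcal{E}(\Lambda(\rho_s)))\le C_F(\Lambda(\rho_s))$, which points the wrong way. However, maximizing over $\mathcal{E}$ and including incoherent unitaries $U$ gives
\begin{equation*}
\max_{\mathcal{E}}C_F(\mathcal{E}(\Lambda(\rho_s)))\ge C_F(U\Lambda(\rho_s)U^\dagger)=C_F(\Lambda(\rho_s)),
\end{equation*}
by unitary invariance of $C_F$. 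Combined with the hypothesis, this again yields the inequality, and it shows that the maximum is attained in the best case by the IO that "undoes" coherence loss as much as IO permits.

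The main obstacle is interpretive rather than technical. The statement's maximum over $\mathcal{E}$ together with $\lim_{n\to\infty}$ must be read so that the identity, or $\mathcal{E}^{\otimes n}$, is admissible for each $n$. I would therefore state explicitly that the feasible set for each $n$ contains the trivial processing, so the sequence of maxima is bounded below by $C_F(\Lambda(\rho_s))$. If the limit is not known to exist, I would replace it by $\liminf$, which suffices. Continuity of $C_F$ is a maximum of linear functionals over a compact set, so the $2\epsilon$-approximation of $\mu_{sc}$ does not affect the bound; this is the remaining routine check.
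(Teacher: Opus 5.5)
Your first route is correct, and it is a genuinely different and much shorter argument than the paper's. You exhibit a feasible point of the outer maximization. For $\mathcal{E}=\mathrm{id}^{\otimes n}$, the state $\rho_s\otimes\tau_c=\rho_s^{\otimes n}\otimes I_R/n$ is left untouched by all three steps of the protocol. Hence $\rho_s'=\rho_s$, and the maximum is at least $C_F(\Lambda(\rho_s))$ for every $n$; your $\liminf$ remark covers existence of the limit.

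The paper instead drops the outer maximum in favour of the \emph{fixed} $\mathcal{E}$ of the hypothesis. It rewrites $C_F(\Lambda(\rho_s'))$ as an averaged single-site fidelity of $\Omega^{\otimes n}\Lambda^{\otimes n}\mathcal{E}(\rho_s^{\otimes n})$ and invokes the hypothesis to move $\Lambda^{\otimes n}$ past $\mathcal{E}$. It then absorbs $\mathcal{E}$ into the optimizing IO to reach a regularized quantity $C_F^{\mathrm{reg}}(\Lambda(\rho_s))$, and finally shows $C_F^{\mathrm{reg}}\ge C_F$ by restricting to product IO. That route aims at the stronger intermediate bound $\ge C_F^{\mathrm{reg}}(\Lambda(\rho_s))$, which is the only place the commutation hypothesis does any work.

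Your route uses no hypothesis and proves the inequality for every channel $\Lambda$. It also makes explicit that the inequality is saturated by doing nothing, so on its own it certifies no catalytic advantage.

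Be aware that the paper's chain cannot hold for a fixed $\mathcal{E}$ in general. The pair $\Lambda=\mathrm{id}$, $\mathcal{E}=\Delta^{\otimes n}$ satisfies the hypothesis, yet gives $\rho_s'=\Delta(\rho_s)$ and $C_F(\Lambda(\rho_s'))=1/d<C_F(\rho_s)=C_F(\Lambda(\rho_s))$ for coherent $\rho_s$. Two steps are problematic. First, replacing the maximum over $\Omega^{\otimes n}\circ\mathcal{E}$ by a maximum over all $n$-copy IO $\Omega''$ enlarges the feasible set and so yields an upper rather than a lower bound. Second, an equality of optimal values $C_F$ is used to swap $\Lambda^{\otimes n}$ and $\mathcal{E}$ inside a different (averaged single-site) functional. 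So it is precisely the outer maximum over $\mathcal{E}$, through the trivial choice you make, that carries the statement.

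Your second route adds nothing to the first. The hypothesis is assumed for one fixed $\mathcal{E}$, so combining it with a maximum over $\mathcal{E}$ is legitimate only when the maximizer is the identity, which is Route 1 again. Moreover, for the hypothesized $\mathcal{E}$ with product processing $\mathcal{E}^{\otimes n}$, monotonicity gives $C_F(\Lambda(\mathcal{E}(\rho_s)))=C_F(\mathcal{E}(\Lambda(\rho_s)))\le C_F(\Lambda(\rho_s))$, i.e.\ no gain at all. The continuity check you list at the end is also unnecessary. $\rho_s'$ is defined exactly as the average of the single-site marginals of $\mathcal{E}(\rho_s^{\otimes n})$, not through the $2\epsilon$-approximation.
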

\begin{proof}
$$\lim_{n\to \infty}\max_{\mathcal{E}\in\text{IO}}C_{F}(\Lambda(\rho_s'))=\lim_{n\to \infty} \max_{\mathcal{E}\in\text{IO}}\max_{\Omega\in\text{IO}}\bra{\phi^{+}_{s}}\Omega(\Lambda(\rho_s'))\ket{\phi^{+}_{s}},$$
$$  \geq  \lim_{n\to \infty} \max_{\Omega\in\text{IO}}\bra{\phi^{+}_{s}}\Omega(\Lambda(\rho_s'))\ket{\phi^{+}_{s}},$$
$$ =  \lim_{n\to \infty} \max_{\Omega\in\text{IO}}\bra{\phi^{+}_{s}}\Omega\left(\Lambda\left(\frac{1}{n}\sum_{i=1}^{n}\text{tr}_{/i}\mathcal{E}(\rho_s^{\otimes n})\right)\right)\ket{\phi^{+}_{s}},$$
$$ =  \lim_{n\to \infty} \max_{\Omega\in\text{IO}}\bra{\phi^{+}_{s}}\frac{1}{n}\sum_{i=1}^{n}\Omega\left(\Lambda\left(\text{tr}_{/i}\mathcal{E}(\rho_s^{\otimes n})\right)\right)\ket{\phi^{+}_{s}},$$
\begin{center}
    as $\Lambda$, $\Omega$ are CPTP maps they are linear,
\end{center}
$$ =  \lim_{n\to \infty} \max_{\Omega'\in\text{IO}}\bra{\phi^{+}_{s}}\frac{1}{n}\sum_{i=1}^{n}\text{tr}_{/i}(\Omega'(\Lambda'(\mathcal{E}(\rho_s^{\otimes n}))))\ket{\phi^{+}_{s}},$$
$$\textrm{where, }\Omega'=\underbrace{\Omega\otimes\Omega\otimes ...\otimes\Omega}_{n  \text{ times}}, \;\text{and } \Lambda'=\underbrace{\Lambda\otimes\Lambda\otimes...\otimes\Lambda}_\textrm{$n$ times},$$
\begin{equation}
    \label{eq:commute 1st step}
     =\lim_{n\to \infty} \max_{\Omega'\in\text{IO}}\frac{1}{n}\sum_{i=1}^{n}\bra{\phi^{+}_{s}}\text{tr}_{/i}(\Omega'(\Lambda'(\mathcal{E}(\rho_s^{\otimes n}))))\ket{\phi^{+}_{s}},
\end{equation}
Using $C_F(\Lambda'(\mathcal{E}(\rho)))=C_F(\mathcal{E}(\Lambda'(\rho)))$ then
\begin{equation}
    \label{eq:commute 2nd step}
    =  \lim_{n\to \infty} \max_{\Omega'\in\text{IO}}\frac{1}{n}\sum_{i=1}^{n}\bra{\phi^{+}_{s}}\text{tr}_{/i}(\Omega'(\mathcal{E}(\Lambda'(\rho_s^{\otimes n}))))\ket{\phi^{+}_{s}},
\end{equation}
$$ =  \lim_{n\to \infty} \max_{\Omega''\in\text{IO}}\frac{1}{n}\sum_{i=1}^{n}\bra{\phi^{+}_{s}}\text{tr}_{/i}(\Omega''((\Lambda(\rho_s))^{\otimes n}))\ket{\phi^{+}_{s}},$$
where, $\Omega''=\Omega'o \text{ }\mathcal{E}$ is the composed IO
$$=\lim_{n\to \infty} \frac{C_{F_n}((\Lambda(\rho_{s}))^{\otimes n})}{n}=C_{F}^{\text{\text{reg}}}(\Lambda(\rho_s)).$$
\begin{equation}
\label{eq:1st inequality}
\text{Therefore }
    \lim_{n\to \infty}\max_{\mathcal{E}\in\text{IO}}C_{F}(\Lambda(\rho_s'))\geq C_{F}^{\text{\text{reg}}}(\Lambda(\rho_s))
\end{equation}
$$C_{F_n}((\Lambda(\rho_{s}))^{\otimes n})=\max_{\Omega''\in \text{IO}}\sum_{i=1}^{n}\bra{\phi^{+}_{s}}\text{tr}_{/i}(\Omega''((\Lambda(\rho_s))^{\otimes n}))\ket{\phi^{+}_{s}}$$
If we consider a special case where $\Omega''=\Omega''_1\otimes\Omega''_2\otimes...\otimes\Omega''_n$ with $\Omega''_1=\Omega''_2=...=\Omega''_n$ we get from the above 
$$C_{F_n}((\Lambda(\rho_{s}))^{\otimes n})\geq\max_{\Omega''_1\in \text{IO}}\sum_{i=1}^{n}\bra{\phi^{+}_{s}}\text{tr}_{/i}(\Omega''_1((\Lambda(\rho_s))^{\otimes n}))\ket{\phi^{+}_{s}}$$
$$=n\max_{\Omega''_1\in \text{IO}}\bra{\phi^{+}_{s}}\Omega''_1(\Lambda(\rho_s))\ket{\phi^{+}_{s}}$$
$$=n\;C_{F}(\Lambda(\rho_s))$$
$$\text{therefore } \frac{ C_{F_n}((\Lambda(\rho_{s}))^{\otimes n})}{n}\geq C_{F}(\Lambda(\rho_s))$$
Taking the limit $n\to\infty$ on both sides, we get 
\begin{equation}
\label{eq:2nd inequality}
    C_{F}^{\text{\text{reg}}}(\Lambda(\rho_s))\geq C_{F}(\Lambda(\rho_s))
\end{equation}
Combining equation \eqref{eq:1st inequality} and \eqref{eq:2nd inequality} we get 
\begin{equation}
%\label{eq:main inequality 3}
    \lim_{n\to \infty}\max_{\mathcal{E}\in\text{IO}}C_{F}(\Lambda(\rho_s'))\geq C_{F}(\Lambda(\rho_s)).
\end{equation}
\end{proof}
We next have to focus on the cases where $C_F(\Lambda(\mathcal{E}(\rho_s)))=C_F(\mathcal{E}(\Lambda(\rho_s)))$ is valid. The most obvious answer will be when the channel $\Lambda$ and the IO $\mathcal{E}$ commute with each other. In section \ref{sec:Example}, we give an example of a quantum channel that commutes with a SIO. Algebraically, two CPTP maps will commute with each other if and only if their respective super-operator matrices commute. Any CPTP $\Lambda$ may can be written as Kraus decomposition $\Lambda(\rho)=\sum_nK_n\rho K_{n}^{\dagger}$ where $\sum K_{n}^{\dagger}K_{n}=I.$ The super-operator of the CPTP map $\Lambda$ can be computed by $\sum_n K_n\otimes K_{n}^{*}.$ 

Apart from the commutative condition, there may exist channel and IO which do not commute, but $C_F(\Lambda'(\mathcal{E}(\rho)))=C_F(\mathcal{E}(\Lambda'(\rho)))$. As an example, let us consider the completely dephasing channel, which is defined as 
\begin{equation}
    \Delta(\rho)=\sum_{i=0}^{d-1}\ket{i}\bra{i}\rho\ket{i}\bra{i}.
\end{equation}
Clearly, for this channel $$C_F(\Delta(\mathcal{E}(\rho)))=C_F(\mathcal{E}(\Delta(\rho)))=\frac{1}{d}.$$
Hence, for a completely dephasing channel $\Delta$, the inequality \eqref{eq:main inequality} saturates, and we can not have an advantage for using the processed state.

But if $C_F(\Lambda(\mathcal{E}(\rho_s)))\neq C_F(\mathcal{E}(\Lambda(\rho_s)))$  what will happen ? Can we still use the catalytic pre-processing strategy, and if so, under what conditions? The next theorem specifically answers these questions.
\begin{theorem}
\label{th:1st main theorem}
     By the catalytic process in theorem \ref{th:catalytic IO}, we have transferred the original input state $\rho_s$ to a processed state $\rho_s'$ (see equation \eqref{eq:final state}). If for a channel $\Lambda$ and IO $\mathcal{E}$, $C_F(\Lambda(\mathcal{E}(\rho_s)))\neq C_F(\mathcal{E}(\Lambda(\rho_s)))$, but $\Lambda$ is an incoherent channel then 
\begin{equation}
    \label{eq:main inequality 3}
    \lim_{n\to \infty}\max_{\mathcal{E}\in\text{IO}}C_{F}(\Lambda(\rho_s'))\geq C_{F}(\Lambda(\rho_s)).
\end{equation}
\end{theorem}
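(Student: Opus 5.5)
The plan is to avoid the commutation hypothesis of Theorem~\ref{th:main 2nd theorem} entirely. Since $\Lambda$ is incoherent, we can absorb $\Lambda$ into the free operations. The key observation is that the maximum over $\mathcal{E}\in\text{IO}$ on the left-hand side of \eqref{eq:main inequality 3} is bounded below by the value at any particular admissible choice. So it suffices to exhibit, for every $n$, one IO $\mathcal{E}$ for which $C_F(\Lambda(\rho_s'))\geq C_F(\Lambda(\rho_s))$.

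\textbf{Step 1: the product choice.} Take $\mathcal{E}=\mathcal{E}_1^{\otimes n}$, where $\mathcal{E}_1$ is a single-copy IO. This is an IO on $s^{\otimes n}$, since tensor products of incoherent Kraus operators are incoherent. Then $\text{tr}_{/i}\mathcal{E}(\rho_s^{\otimes n})=\mathcal{E}_1(\rho_s)$ for every $i$. Hence by \eqref{eq:final state} the processed state is $\rho_s'=\mathcal{E}_1(\rho_s)$, independent of $n$.

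\textbf{Step 2: choose $\mathcal{E}_1$ as the identity.} Then $\rho_s'=\rho_s$, and immediately
\[
\lim_{n\to\infty}\max_{\mathcal{E}\in\text{IO}}C_F(\Lambda(\rho_s'))\geq C_F(\Lambda(\rho_s)).
\]
This already proves the statement as written. To make the argument reflect the intended mechanism, I would instead run the chain of Theorem~\ref{th:main 2nd theorem} without the swap step \eqref{eq:commute 2nd step}. Write
\[
C_F(\Lambda(\rho_s'))=\max_{\Omega\in\text{IO}}\frac1n\sum_i\bra{\phi^+_s}\text{tr}_{/i}\,\Omega^{\otimes n}\big(\Lambda^{\otimes n}(\mathcal{E}(\rho_s^{\otimes n}))\big)\ket{\phi^+_s}.
\]
Then note that $\Omega^{\otimes n}\circ\Lambda^{\otimes n}\circ\mathcal{E}$ is itself an IO whenever $\Lambda$ is incoherent, because compositions of IO are IO. Maximizing over $\mathcal{E}$ therefore yields a quantity at least as large as the regularized fraction obtained by restricting to product operations $(\Omega_1\circ\Lambda)^{\otimes n}$. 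This again reduces to $n\,C_F(\Lambda(\rho_s))/n$, exactly as in the derivation of \eqref{eq:2nd inequality}.

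\textbf{Main obstacle.} The delicate point is the last step. One must check that the product-restricted expression equals $C_F(\Lambda(\rho_s))$, not merely $C_F(\rho_s)$ or some quantity in which $\Lambda$ has been absorbed away. If we let $\mathcal{E}$ range freely, incoherence of $\Lambda$ is what guarantees the composite stays free. I would fix $\mathcal{E}=\mathrm{id}^{\otimes n}$ (or $\mathcal{E}_1^{\otimes n}$) explicitly so that $\Lambda$ is not absorbed. I would also verify that the limit in $n$ exists, or else replace $\lim$ with $\liminf$; this is harmless here because the bound holds uniformly in $n$.
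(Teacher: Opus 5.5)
Your proof is correct, and it takes a genuinely different route from the paper. You use the maximum over $\mathcal{E}$ in an essential way. The admissible choice $\mathcal{E}=\mathrm{id}^{\otimes n}$ gives $\rho_s'=\rho_s$ for every $n$, so, as you note, the bound holds uniformly in $n$ and survives replacing $\lim$ by $\liminf$. You need neither the incoherence of $\Lambda$ nor the hypothesis $C_F(\Lambda(\mathcal{E}(\rho_s)))\neq C_F(\mathcal{E}(\Lambda(\rho_s)))$; the stated inequality in fact holds for every channel.

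The paper instead fixes $\mathcal{E}$ and absorbs $\Omega^{\otimes n}\circ\Lambda^{\otimes n}\circ\mathcal{E}$ into a single IO $\Omega'''$; this is where incoherence of $\Lambda$ is used. It identifies the result with $C_F^{\text{reg}}(\rho_s)$ and then chains $C_F^{\text{reg}}(\rho_s)\geq C_F(\rho_s)\geq C_F(\Lambda(\rho_s))$. This aims at the stronger conclusion that the catalytic output beats even the noiseless input.

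That identification is only an upper bound, though. The family $\{\Omega^{\otimes n}\circ\Lambda^{\otimes n}\circ\mathcal{E}\}$ is a subset of IO, so its maximum is at most the unrestricted one, and the stronger bound is false in general. For instance, on a qubit take $\Lambda$ to be dephasing of strength $p\in(0,1)$ followed by the phase gate $\mathrm{diag}(1,i)$, take $\mathcal{E}(\rho)=\frac12(\rho+X\rho X)$, and take $\rho_s=\ket{+}\bra{+}$. The stated hypotheses hold, since $C_F(\Lambda(\mathcal{E}(\rho_s)))=1-p/2\neq\frac12=C_F(\mathcal{E}(\Lambda(\rho_s)))$. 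Yet for every $\rho'$ one has $C_F(\Lambda(\rho'))=\frac12+(1-p)|\rho'_{01}|\leq 1-p/2<1=C_F(\rho_s)$.

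So your Steps 1--2 are the whole proof. The \emph{intended mechanism} paragraph adds nothing. Your remark that incoherence of $\Lambda$ keeps the composite free matters only for upper bounds, never for the lower bound you need.
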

\begin{proof}
    $$\lim_{n\to \infty}\max_{\mathcal{E}\in\text{IO}}C_{F}(\Lambda(\rho_s'))=\lim_{n\to \infty} \max_{\mathcal{E}\in\text{IO}}\max_{\Omega\in\text{IO}}\bra{\phi^{+}_{s}}\Omega(\Lambda(\rho_s'))\ket{\phi^{+}_{s}},$$
$$  \geq  \lim_{n\to \infty} \max_{\Omega\in\text{IO}}\bra{\phi^{+}_{s}}\Omega(\Lambda(\rho_s'))\ket{\phi^{+}_{s}},$$
$$ =  \lim_{n\to \infty} \max_{\Omega\in\text{IO}}\bra{\phi^{+}_{s}}\Omega\left(\Lambda\left(\frac{1}{n}\sum_{i=1}^{n}\text{tr}_{/i}\mathcal{E}(\rho_s^{\otimes n})\right)\right)\ket{\phi^{+}_{s}},$$
$$ =  \lim_{n\to \infty} \max_{\Omega\in\text{IO}}\bra{\phi^{+}_{s}}\frac{1}{n}\sum_{i=1}^{n}\Omega\left(\Lambda\left(\text{tr}_{/i}\mathcal{E}(\rho_s^{\otimes n})\right)\right)\ket{\phi^{+}_{s}},$$
\begin{center}
    as $\Lambda$, $\Omega$ are CPTP maps they are linear,
\end{center}
$$ =  \lim_{n\to \infty} \max_{\Omega'\in\text{IO}}\bra{\phi^{+}_{s}}\frac{1}{n}\sum_{i=1}^{n}\text{tr}_{/i}(\Omega'(\Lambda'(\mathcal{E}(\rho_s^{\otimes n}))))\ket{\phi^{+}_{s}},$$
$$\textrm{where, }\Omega'=\underbrace{\Omega\otimes\Omega\otimes ...\otimes\Omega}_{n  \text{ times}}, \;\text{and } \Lambda'=\underbrace{\Lambda\otimes\Lambda\otimes...\otimes\Lambda}_\textrm{$n$ times},$$
$$ =  \lim_{n\to \infty} \max_{\Omega'\in\text{IO}}\frac{1}{n}\sum_{i=1}^{n}\bra{\phi^{+}_{s}}\text{tr}_{/i}(\Omega'(\Lambda'(\mathcal{E}(\rho_s^{\otimes n}))))\ket{\phi^{+}_{s}},$$
If the channel $\Lambda$ is an Incoherent channel, that is, the Kraus matrix has at most one non-zero element in each column, then $\Omega'''=\Omega'o\text{ }\Lambda\text{ } o\text{ }\mathcal{E}$ is also an IO.
$$ =  \lim_{n\to \infty} \max_{\Omega'''\in\text{IO}}\frac{1}{n}\sum_{i=1}^{n}\bra{\phi^{+}_{s}}\text{tr}_{/i}(\Omega'''(\rho_s^{\otimes n}))\ket{\phi^{+}_{s}},$$
$$=\lim_{n\to \infty} \frac{C_{F_n}(\rho_{s}^{\otimes n})}{n}=C_{F}^{\text{\text{reg}}}(\rho_s).$$
\begin{equation}
\label{eq:4th inequality}
\text{Therefore }
    \lim_{n\to \infty}\max_{\mathcal{E}\in\text{IO}}C_{F}(\Lambda(\rho_s'))\geq C_{F}^{\text{\text{reg}}}(\Lambda(\rho_s))
\end{equation}
$$C_{F_n}(\rho_{s}^{\otimes n})=\max_{\Omega'''\in \text{IO}}\sum_{i=1}^{n}\bra{\phi^{+}_{s}}\text{tr}_{/i}(\Omega'''(\rho_s^{\otimes n}))\ket{\phi^{+}_{s}}$$
If we consider a special case where $\Omega'''=\Omega'''_1\otimes\Omega'''_2\otimes...\otimes\Omega'''_n$ with $\Omega'''_1=\Omega'''_2=...=\Omega'''_n$ we get 
$$C_{F_n}(\rho_{s}^{\otimes n})\geq\max_{\Omega'''_1\in \text{IO}}\sum_{i=1}^{n}\bra{\phi^{+}_{s}}\text{tr}_{/i}(\Omega'''_1(\rho_s^{\otimes n}))\ket{\phi^{+}_{s}}$$
$$=n\max_{\Omega'''_1\in \text{IO}}\bra{\phi^{+}_{s}}\Omega'''_1(\rho_s)\ket{\phi^{+}_{s}}$$
$$=n\;C_{F}(\rho_s)$$
$$\text{therefore } \frac{ C_{F_n}(\rho_{s}^{\otimes n})}{n}\geq C_{F}(\rho_s)$$
Taking the limit $n\to\infty$ on both sides, we get 
\begin{equation}
\label{eq:5th inequality}
    C_{F}^{\text{\text{reg}}}(\rho_s)\geq C_{F}(\rho_s)
\end{equation}
Combining equation \eqref{eq:3rd inequality}, \eqref{eq:4th inequality} and \eqref{eq:5th inequality} we get 
\begin{equation}
\label{eq:main inequality 2}
    \lim_{n\to \infty}\max_{\mathcal{E}\in\text{IO}}C_{F}(\Lambda(\rho_s'))\geq C_{F}(\rho_s)\geq C_{F}(\Lambda(\rho_s)).
\end{equation}
\end{proof}

In section \ref{sec:example 1}, for the qutrit phase damping channel, we provide a numerical example of the advantage achieved by using a catalysis.

Note that the inequality \eqref{eq:main inequality 2} is violated for the complete dephasing channel $\Delta$, which is defined as 
\begin{equation}
    \Delta(\rho)=\sum_{i=0}^{d-1}\ket{i}\bra{i}\rho\ket{i}\bra{i}.
\end{equation}
For this channel, the left-most expression and also the right most expressions' value will be $\frac{1}{d}$, and in general, if the input state $\rho_s$ has some non-zero coherence, then $C_F(\rho_s)>\frac{1}{d}$. 

\subsection{Application in phase discrimination}
In phase discrimination  \cite{napoli2016robustness} Alice holds a  $d$ dimensional quantum state $\rho_s$ on which phase encoding is implemented through unitary action $U_{\phi}=exp(iH\phi)$ with $H=\sum_{j=0}^{d-1}j\ket{j}\bra{j}$ and $\phi\in\mathbb{R}$. The basis for coherence is determined by the evenly spaced spectrum of the Hamiltonian $H$ of the system. After the phase encoding the output state is $f_{\phi}(\rho_s)=U_{\phi} \rho_s U_{\phi}^{\dagger}$. If one of the $n$ phases $\{\phi_{k}|k=1,1,2,...,n\}$ are encoded with probability $p_k$, the collection of pairs 
$\Theta=\{(p_k,\phi_k)|k=1,2,...,n\}$ defines a phase discrimination game. Alice's objective is to accurately determine the encoded phase by carrying out a generalized measurement on $f_{\phi}(\rho_s)$ using measurement operators $\{\mathcal{M}_k\}$ that satisfy $\mathcal{M}_k \geq 0$ and $\sum_k \mathcal{M}_k = I$.  The optimal probability  of successfully determine the encoded phase  \cite{napoli2016robustness} can be calculated as,
\begin{equation}
    P^{\text{succ}}_{\Theta}(\rho_s)=\max_{\{\mathcal{M}_k\}} \sum_k p_k \text{tr}[U_{\phi_k} \rho_s U_{\phi_k}^{\dagger}\mathcal{M}_k ].
\end{equation}
On the other hand, if Alice's possesses an incoherent state $\delta_s$, the probability of successfully determining the encoded phase will be \cite{napoli2016robustness}
\begin{equation}
    P^{\text{succ}}_{\Theta}(\delta_s)=\max_{\{k\}} p_k.
\end{equation}
The maximum benefit obtained from using a coherence state $\rho_s$ instead of any incoherent state $\delta_s$ in all possible phase discrimination scenarios can be precisely calculated by the robustness of coherence of $\rho_s$ \cite{napoli2016robustness},
\begin{equation}
    \max_{\Theta}\frac{p_{\Theta}^{\text{succ}}(\rho_s)}{p_{\Theta}^{\text{succ}}(\delta_s)}=1+C_{R}(\rho_s).
    \label{eq:maximum advantage}
\end{equation}
We call the L.H.S. of equation \eqref{eq:maximum advantage} as the \textit{maximum advantage ratio} in phase discrimination for a state $\rho_s$. 

Thus, the slightest amount of coherence in the input state can help Alice better determine the actual implemented phase, making quantum coherence a vital resource for phase discrimination. The more coherence there is, the better it is for the task. However, in realistic scenarios, noise or decoherence (modeled mathematically by a quantum channel $\Lambda$) typically degrades the coherence of the input state for the phase discrimination task. In a realistic scenario, the input state for phase discrimination is $\Lambda(\rho_s)$ instead of $\rho_s$, and the \textit{maximum advantage ratio} reduces as typically $C_R(\Lambda(\rho_s))< C_R(\rho_s).$ It would be better if we could somehow preserve the coherence of the input state. In the next few paragraphs, using theorems \ref{th:main 2nd theorem} and \ref{th:1st main theorem}, we will prove that using a correlated catalysis, we can actually increase the \textit{maximum advantage ratio} of the phase discrimination task. 
\begin{figure}[ht]
\centering
\includegraphics[width=0.45\textwidth]{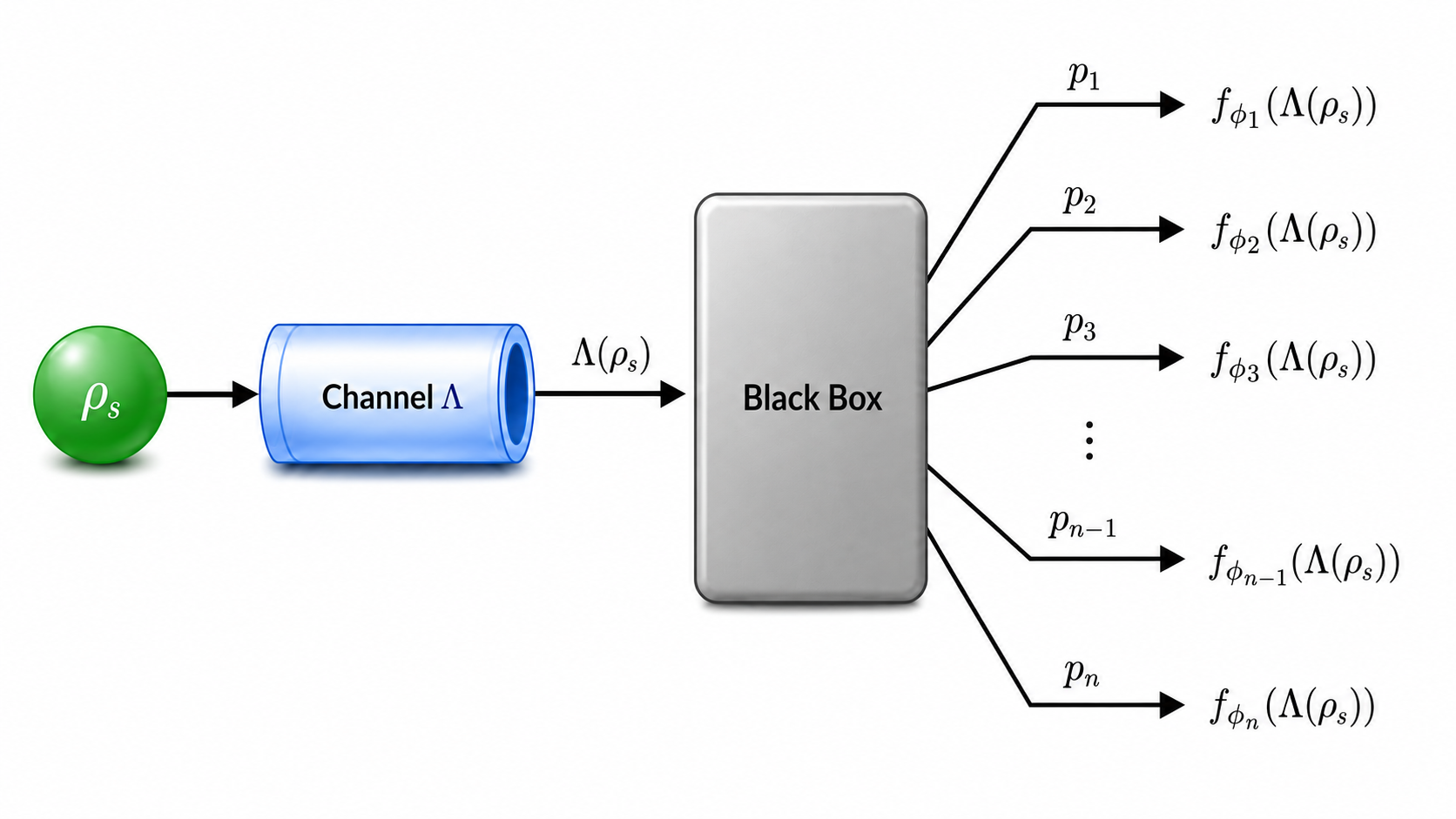}
\caption{Schematic diagram for phase discrimination}
\label{fig:phase discrimination}
\end{figure}
Let for a $\Lambda(\rho_s)$,  we can find an incoherent unitary operator $U$ which maps $\Lambda(\rho_s)$ to $\tilde{\sigma}=U^{\dagger}\Lambda(\rho_s) U$ with entries $\tilde{\sigma}_{ij}=|(\Lambda(\rho_s))_{ij}|$ then using \eqref{eq:fraction with robustness} we get 
\begin{equation}
    \max_{\Theta}\frac{p_{\Theta}^{\text{succ}}(\Lambda(\rho_s))}{p_{\Theta}^{\text{succ}}(\delta_s)}=1+C_{R}(\Lambda(\rho_s))=d C_{F}(\Lambda(\rho_s)).
    \label{eq:advantage ratio phase discrimination}
\end{equation}
Now if we use $\Lambda(\rho_s')$ (see equation \eqref{eq:final state} for $\rho_s'$) instead of $\Lambda(\rho_s)$ as an input to the phase discrimination task one of the phases $\{\phi_k|1,2,...,n\}$ with a prior probability $p_k$ has been encoded upon $\Lambda(\rho_s')$. The \textit{maximum advantage ratio} by using $\Lambda(\rho_s')$ in all possible phase discrimination game $\Theta$ is given by 
\begin{equation}
    \max_{\Theta}\frac{p_{\Theta}^{\text{succ}}(\Lambda(\rho_s'))}{p_{\Theta}^{\text{succ}}(\delta_s)}=1+ C_{R}(\Lambda(\rho_s')).
    \label{eq:channel advantage ratio}
\end{equation}
Using equation \eqref{eq:fraction with robustness inequality} in equation \eqref{eq:channel advantage ratio} we have
\begin{equation}
   \lim_{n\to \infty}\max_{\mathcal{E}\in \text{IO}} \left(\max_{\Theta}\frac{p_{\Theta}^{\text{succ}}(\Lambda(\rho_s'))}{p_{\Theta}^{\text{succ}}(\delta_s)}\right)\geq \lim_{n\to \infty}\max_{\mathcal{E}\in \text{IO}} dC_F(\Lambda(\rho_s')).
\end{equation}
Assuming the quantum channel $\Lambda$ satisfy conditions of either theorem \ref{th:main 2nd theorem} or \ref{th:1st main theorem}, we have from equation \eqref{eq:main inequality} or \eqref{eq:main inequality 3}
\begin{figure}[ht]
\centering
\includegraphics[width=0.45\textwidth]{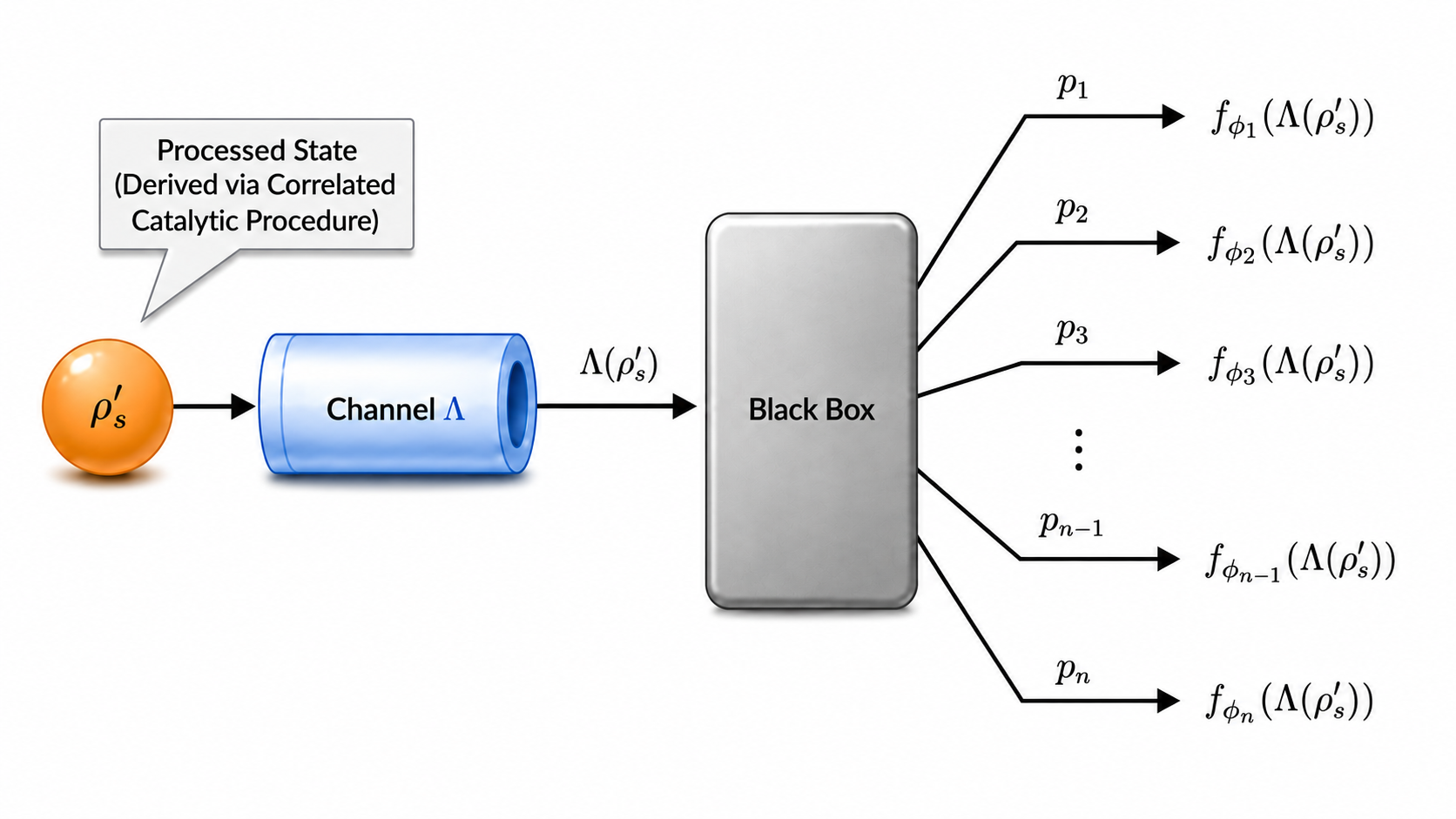}
\caption{Schematic diagram for catalytic phase discrimination}
\label{fig:phase discrimination catalytic}
\end{figure}
\begin{align*}
     \lim_{n\to \infty}\max_{\mathcal{E}\in \text{IO}} \left(\max_{\Theta}\frac{p_{\Theta}^{\text{succ}}(\Lambda(\rho_s'))}{p_{\Theta}^{\text{succ}}(\delta_s)}\right)&\geq \lim_{n\to \infty}\max_{\mathcal{E}\in \text{IO}} dC_F(\Lambda(\rho_s')) \\ \geq dC_F(\Lambda(\rho_s))&=\max_{\Theta}\frac{p_{\Theta}^{\text{succ}}(\Lambda(\rho_s))}{p_{\Theta}^{\text{succ}}(\delta_s)}.
\end{align*}
This demonstrates that catalysis can protect or amplify coherence through a channel, leading to better post-channel performance in an operational task like phase discrimination. 

\section{Example 1: Quantum Addition Channel}
\label{sec:Example}
In this section, we will provide an example of a quantum channel that commutes with a particular type of SIO. First, we will prove a theorem regarding the multiplicative property of this particular type of SIO.
\begin{theorem}
\label{th:Dipayan's 2nd theorem}
    If a CPTP map $\mathcal{E}$ is of the  form 
   $\mathcal{E}(X)=A \odot X$,
    with $A \succeq 0$ and $A_{ii}=1$ for all $i$ ( so $\mathcal{E}$ is SIO by theorem \ref{th:Dipayan's 1st theorem}), then for any incoherent state $\sigma$ and any state $\rho$, the following multiplicative relation holds, 
    $$\mathcal{E}(\rho)\mathcal{E}(\sigma)=\mathcal{E}(\rho \sigma).$$
    
\end{theorem}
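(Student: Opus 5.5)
The plan is a direct entrywise computation. The key observation is that an incoherent state $\sigma$ is diagonal, $\sigma=\sum_k s_k\ket{k}\bra{k}$. Because $A_{ii}=1$, the Schur multiplier leaves every diagonal matrix unchanged, so $\mathcal{E}(\sigma)=A\odot\sigma=\sigma$. This is the only place where the hypothesis $A_{ii}=1$ enters. The positivity $A\succeq 0$ is needed only so that $\mathcal{E}$ is a channel, and hence an SIO by Theorem \ref{th:Dipayan's 1st theorem}; it plays no role in the identity itself.

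Next I will compute both sides entry by entry in the fixed basis $\{\ket{i}\}$. Right multiplication by a diagonal matrix scales columns, so $(\rho\sigma)_{ij}=\rho_{ij}s_j$. Hence
\begin{equation*}
\bigl(\mathcal{E}(\rho\sigma)\bigr)_{ij}=a_{ij}\rho_{ij}s_j .
\end{equation*}
On the other side, $\mathcal{E}(\rho)\mathcal{E}(\sigma)=(A\odot\rho)\,\sigma$, whose $(i,j)$ entry is $\sum_k a_{ik}\rho_{ik}\,\sigma_{kj}=a_{ij}\rho_{ij}s_j$, because $\sigma_{kj}=s_j\delta_{kj}$. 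The two matrices agree entrywise, which gives $\mathcal{E}(\rho)\mathcal{E}(\sigma)=\mathcal{E}(\rho\sigma)$.

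For a more structural presentation, I would note that $(A\odot X)D=A\odot(XD)$ holds for every diagonal matrix $D$, since Schur multiplication and column scaling commute. I would then combine this with $\mathcal{E}(D)=D$. The same argument also gives $\mathcal{E}(\sigma\rho)=\mathcal{E}(\sigma)\mathcal{E}(\rho)$, and I may remark on this. I would also point out that the result is specific to diagonal $\sigma$ and does not extend to general products.

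There is no real obstacle here. The only point requiring care is to use $A_{ii}=1$ explicitly so that $\mathcal{E}(\sigma)=\sigma$, and not to state this as if it followed from the map being an SIO.
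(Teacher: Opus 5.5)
Your proposal is correct and follows essentially the same route as the paper: write $\sigma$ as a diagonal matrix, use $A_{ii}=1$ to get $\mathcal{E}(\sigma)=\sigma$, and check that both sides have $(i,j)$ entry $A_{ij}\rho_{ij}\sigma_{jj}$. Your extra remarks are also correct: Schur multiplication commutes with column scaling (and with row scaling, which gives the left-multiplied identity $\mathcal{E}(\sigma\rho)=\mathcal{E}(\sigma)\mathcal{E}(\rho)$), and $A\succeq 0$ plays no role in the identity itself.
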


\begin{proof}
    let $\mathcal{E}$ be of the form
    $\mathcal{E}(X)=A \odot X$,
    with $A \succeq 0$ and $A_{ii}=1$ for all $i$.
    We will show for every density operator $\rho$ and every incoherent state $\sigma$
    $$\mathcal{E}(\rho \sigma)=\mathcal{E}(\rho) \mathcal{E}(\sigma).$$
    Let $\sigma=diag(\sigma_{11}, \sigma_{22},.....,\sigma_{dd})$. Since $\sigma$ is diagonal,
    $$(\rho \sigma)_{ij}=\sum_k \rho_{ik} \sigma_{kj}=\rho_{ij}\sigma_{jj}.$$
    Now, for all $i,j$ we have
    \begin{equation}
    \label{eq:Dipayan's l.h.s.}
        (\mathcal{E}(\rho \sigma))_{ij}=(A \odot (\rho \sigma))_{ij}=A_{ij}(\rho \sigma)_{ij}=A_{ij}\rho_{ij}\sigma_{jj}.
    \end{equation}
    Again using $A_{ii}=1$ we can prove that,
    $$\mathcal{E}(\sigma)= A \odot \sigma= diag(A_{11}\sigma_{11}, A_{22} \sigma_{22},...., A_{dd}\sigma_{dd})=\sigma.$$ 
    Therefore for all $i,j$ using $\mathcal{E}(\sigma)=\sigma$, we have
    \begin{equation}
    \label{eq:Dipayan's r.h.s}
        ( \mathcal{E}(\rho) \mathcal{E}(\sigma))_{ij}=( (A \odot \rho) \sigma)_{ij}=A_{ij} \rho_{ij} \sigma_{jj}.
    \end{equation}
    Comparing equation \eqref{eq:Dipayan's l.h.s.} and \eqref{eq:Dipayan's r.h.s}, we get $$\mathcal{E}( \rho \sigma)=\mathcal{E}(\rho)\mathcal{E}(\sigma) \text{ or equivalently }$$ 
    \begin{equation}
        \label{eq:Dipayn's theorem equation}
        A\odot(\rho\sigma)=(A\odot\rho)(A\odot\sigma)
    \end{equation}
\end{proof}
\textbf{Quantum Addition Channel:} Now, let us define the quantum addition channel \cite{mukhopadhyay2020quantum} $\Lambda$ by 
\begin{multline}
\label{eq:quantum addition channel}
    \Lambda(\mathcal{E}(\rho_A))=\mathcal{E}(\rho_A) \boxplus_\alpha \mathcal{E}(\sigma_B)\\ =\alpha \mathcal{E}(\rho_A) + (1-\alpha) \mathcal{E}(\sigma_B)-i\sqrt{\alpha(1-\alpha)}[\mathcal{E}(\rho_A),\mathcal{E}(\sigma_B)],
\end{multline}
where $\mathcal{E}$ is any IO. The above channel $\Lambda$ can also be represented as 
$$\Lambda(\mathcal{E}(\rho_A))=Tr_B[U_\alpha (\mathcal{E}(\rho_A) \otimes \mathcal{E}(\sigma_B)) U_\alpha^\dagger]$$
Where $$U_\alpha=\sqrt{\alpha}I+i\sqrt{1-\alpha}S$$
Where $S$ is swap operator and $S\left(\ket{i}_A \otimes \ket{j}_B \right)=\ket{j}_A \otimes \ket{i}_B$, $S=\sum_{i,j} \ket{i}\bra{j}_A \otimes \ket{j} \bra{i}_B$\\\\
We next prove that the quantum addition channel $\Lambda$ commutes with the special type of SIO $\mathcal{E}$ defined as  $\mathcal{E}(X)=A \odot X$,
    with $A \succeq 0$ and $A_{ii}=1$ for all $i$.
\begin{theorem}
    A quantum addition channel $\Lambda$ defined by \eqref{eq:quantum addition channel} always commutes with an SIO $\mathcal{E}$ where $\mathcal{E}$ is defined as  $\mathcal{E}(X)=A \odot X$, with $A \succeq 0$ and $A_{ii}=1$ for all $i$. 
\end{theorem}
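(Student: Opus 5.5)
The plan is to verify directly that $\Lambda(\mathcal{E}(\rho_A))=\mathcal{E}(\Lambda(\rho_A))$ for every state $\rho_A$. I will use only two ingredients: linearity of $\mathcal{E}$, and the multiplicative relation of Theorem \ref{th:Dipayan's 2nd theorem}. Throughout I take the fixed ancilla state $\sigma_B$ of the quantum addition channel to be incoherent, i.e. diagonal in the reference basis; this is the natural setting here, and it is exactly what makes Theorem \ref{th:Dipayan's 2nd theorem} applicable. The first observation, already shown in the proof of Theorem \ref{th:Dipayan's 2nd theorem}, is $\mathcal{E}(\sigma_B)=A\odot\sigma_B=\sigma_B$, because $A_{ii}=1$. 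Hence in \eqref{eq:quantum addition channel} the channel acts as $\Lambda(X)=\alpha X+(1-\alpha)\sigma_B-i\sqrt{\alpha(1-\alpha)}[X,\sigma_B]$, whether the argument is $\rho_A$ or $\mathcal{E}(\rho_A)$.

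Next I apply $\mathcal{E}$ to $\Lambda(\rho_A)$ and use linearity of the Schur product:
$$\mathcal{E}(\Lambda(\rho_A))=\alpha\,\mathcal{E}(\rho_A)+(1-\alpha)\,\mathcal{E}(\sigma_B)-i\sqrt{\alpha(1-\alpha)}\bigl(\mathcal{E}(\rho_A\sigma_B)-\mathcal{E}(\sigma_B\rho_A)\bigr).$$
The first two terms already match $\Lambda(\mathcal{E}(\rho_A))$, since $\mathcal{E}(\sigma_B)=\sigma_B$. For the commutator, Theorem \ref{th:Dipayan's 2nd theorem} gives $\mathcal{E}(\rho_A\sigma_B)=\mathcal{E}(\rho_A)\mathcal{E}(\sigma_B)$ directly.

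The only point not literally covered by Theorem \ref{th:Dipayan's 2nd theorem} is the reversed product $\mathcal{E}(\sigma_B\rho_A)=\mathcal{E}(\sigma_B)\mathcal{E}(\rho_A)$. I will prove it by the same entrywise computation. Since $\sigma_B$ is diagonal, $(\sigma_B\rho_A)_{ij}=\sigma_{ii}\rho_{ij}$, so $(A\odot(\sigma_B\rho_A))_{ij}=A_{ij}\sigma_{ii}\rho_{ij}=(\sigma_B(A\odot\rho_A))_{ij}$. Alternatively, one can take adjoints of the stated relation, using that $A$ is Hermitian (as $A\succeq0$), so that $(A\odot X)^\dagger=A\odot X^\dagger$. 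Combining the two products, the commutator satisfies $\mathcal{E}([\rho_A,\sigma_B])=[\mathcal{E}(\rho_A),\mathcal{E}(\sigma_B)]$, and all three terms coincide.

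This gives $\mathcal{E}\circ\Lambda=\Lambda\circ\mathcal{E}$ on all states, and the identity extends to all operators by linearity. As a consistency check, I can rederive the same conclusion from the dilation $U_\alpha=\sqrt{\alpha}I+i\sqrt{1-\alpha}S$, noting that $\mathcal{E}\otimes\mathrm{id}$ commutes appropriately with the swap-based unitary when the $B$ input is diagonal. I do not expect a serious obstacle. The only delicate point is making explicit the assumption that $\sigma_B$ is incoherent: without it, $\mathcal{E}(\sigma_B)\neq\sigma_B$ in general and the commutator terms fail to match.
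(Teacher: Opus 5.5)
Your proposal is correct and follows the paper's route: expand $\mathcal{E}(\Lambda(\rho_A))$ by linearity, then turn $\mathcal{E}(\rho_A\sigma_B)-\mathcal{E}(\sigma_B\rho_A)$ into $[\mathcal{E}(\rho_A),\mathcal{E}(\sigma_B)]$ using the multiplicative relation for Schur multipliers with an incoherent $\sigma_B$. You are in fact more careful than the paper on two points. The paper uses the reversed product $\mathcal{E}(\sigma_B\rho_A)=\mathcal{E}(\sigma_B)\mathcal{E}(\rho_A)$ without comment, and it never notes that $\mathcal{E}(\sigma_B)=\sigma_B$ is what reconciles its definition of $\Lambda(\mathcal{E}(\rho_A))$ (written with $\mathcal{E}(\sigma_B)$) with $\Lambda(\rho_A)=\rho_A\boxplus_\alpha\sigma_B$.
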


\begin{proof}
Since SIO $\mathcal{E}$ is defined as  $\mathcal{E}(X)=A \odot X$, with $A \succeq 0$ and $A_{ii}=1$ for all $i$ then by theorem \ref{th:Dipayan's 2nd theorem} we have that $\mathcal{E}(\rho)\mathcal{E}(\sigma)=\mathcal{E}(\rho\sigma).$ Now as $\sigma$ is an incoherent state and $\mathcal{E}$ is an SIO, it is obvious that $\mathcal{E}(\sigma)$ is incoherent.
    \small{
    \begin{align*}
        & \Lambda(\mathcal{E}(\rho))\\&=\mathcal{E}(\rho) \boxplus_\alpha \mathcal{E}(\sigma)\\ &=\alpha \mathcal{E}(\rho) + (1-\alpha) \mathcal{E}(\sigma)-i\sqrt{\alpha(1-\alpha)}[\mathcal{E}(\rho),\mathcal{E}(\sigma)]\\
        &=\alpha \mathcal{E}(\rho) + (1-\alpha) \mathcal{E}(\sigma)-i\sqrt{\alpha(1-\alpha)}\left(\mathcal{E}(\rho)\mathcal{E}(\sigma)-\mathcal{E}(\sigma)\mathcal{E}(\rho)\right)\\
        &=\alpha \mathcal{E}(\rho) + (1-\alpha) \mathcal{E}(\sigma)-i\sqrt{\alpha(1-\alpha)}\left(\mathcal{E}(\rho\sigma)-\mathcal{E}(\sigma\rho)\right)\\
        &=\alpha \mathcal{E}(\rho) + (1-\alpha) \mathcal{E}(\sigma)-i\sqrt{\alpha(1-\alpha)}\left(\mathcal{E}(\rho\sigma-\sigma\rho)\right)\\
        &=\mathcal{E}(\alpha\rho + (1-\alpha) \sigma-i\sqrt{\alpha(1-\alpha)}\left(\rho\sigma-\sigma\rho\right))\\
        &=\mathcal{E}(\rho\boxplus_\alpha\sigma)\\
        &=\mathcal{E}(\Lambda(\rho))
    \end{align*}
    }
\end{proof}
Thus, we can conclude that the result of theorem \ref{th:main 2nd theorem} for a quantum addition channel $\Lambda$ is valid.

\section{Example 2: Qutrit Phase Damping Channel}
\label{sec:example 1}
In this section, we present a specific example of a qutrit phase-damping channel that illustrates Theorem \ref{th:1st main theorem}.

Let us consider a single-party $3$ dimensional state $\ket{\psi}_{A_1}=\frac{\ket{0}+\ket{1}}{\sqrt{2}}$ and a biparty $3$ dimensional state $\ket{\phi}_{A_{1}A_{2}}=\sqrt{\frac{z}{3}}\ket{00}+\sqrt{\frac{z}{3}}\ket{01}+\sqrt{\frac{z}{3}}\ket{02}+\sqrt{1-z}\ket{12}$.
$$\text{Diag}(\ket{\psi}\bra{\psi}_{A_{1}}^{\otimes 2})=\left(\frac{1}{4},\frac{1}{4},\frac{1}{4},\frac{1}{4},0,0,0,0,0\right),$$ $$\text{ } \text{Diag}(\ket{\phi}\bra{\phi}_{A_{1}A_{2}})=\left(\frac{z}{3},\frac{z}{3},\frac{z}{3},1-z,0,0,0,0,0\right).$$

For $z\geq \frac{3}{4}$, $\text{Diag}(\ket{\phi}\bra{\phi}_{A_{1}A_{2}})$ majorizes $\text{Diag}(\ket{\psi}_{A_{1}}^{\otimes 2})$, and $\ket{\psi}\bra{\psi}_{A_{1}}^{\otimes 2}$ can be transformed to $\ket{\phi}_{A_{1}A_{2}}$. \\

The catalysis state when $n=2$ is
$$\tau_{A_2R}=\frac{1}{2}(\;\zeta_{A_2}\otimes\ket{1}\bra{1}_R+\ket{\psi}\bra{\psi}_{A_2}\otimes\ket{2}\bra{2}_R\;),\text{ where}$$ 
\begin{equation*}
\begin{aligned}
\zeta_{A_2}&=\Tr_{A_1}(\ket{\phi}\bra{\phi}_{A_1A_2})\\&=\frac{z}{3}(\;\ket{0}\bra{0}+\ket{1}\bra{1}+\ket{2}\bra{2}+\ket{0}\bra{1}+\ket{1}\bra{0}\\&+\ket{0}\bra{2}+\ket{2}\bra{0}+\ket{1}\bra{2}+\ket{2}\bra{1}\;)+(1-z)\ket{2}\bra{2}.
\end{aligned}
\end{equation*}
\begin{equation*}
\begin{aligned}
\text{Then, }\ket{\psi}\bra{\psi}_{A_1}\otimes\tau_{A_2 R}&=\frac{1}{2}(\ket{\psi}\bra{\psi}_{A_1}\otimes \zeta_{A_2}\otimes\ket{1}\bra{1}_R\\&+\ket{\psi}\bra{\psi}_{A_1}\otimes\ket{\psi}\bra{\psi}_{A_2}\otimes\ket{2}\bra{2}_R).
\end{aligned}
\end{equation*}
By following step (1)-(3) of theorem \ref{th:catalytic IO} we can transfer  $\ket{\psi}\bra{\psi}_{A_1}\otimes\tau_{A_2 R}$, to $\beta_{A_1A_2R},$
\begin{equation*}
\begin{aligned}
    \beta_{A_1A_2R}=\frac{1}{2}(\;\ket{\phi}\bra{\phi}_{A_1A_2}&\otimes\ket{1}\bra{1}_R\\&+ \zeta_{A_1}\otimes\ket{\psi}\bra{\psi}_{A_2}\otimes\ket{2}\bra{2}_R\;)
\end{aligned}
\end{equation*}
It is important to point out that when we trace out $A_1$ from $\beta_{A_1A_2R}$, the resulting state is $\Tr_{A_1}\beta_{A_1A_2R}=\tau_{A_2R}$. Conversely, tracing out $A_2R$ yields $\rho'_{A_1}=\Tr_{A_2R}(\beta_{A_1A_2R})=\frac{1}{2}(\zeta'_{A_1}+\zeta_{A_1}), \text{ where } \zeta'_{A_1}=\Tr_{A_2}\ket{\phi}\bra{\phi}_{A_1A_2}.$
\begin{equation*}
\begin{aligned}
\rho'_{A_1}=\frac{2z}{3}\ket{0}\bra{0}&+\frac{3-2z}{6}(\; \ket{1}\bra{1}+\ket{2}\bra{2}\;)\\&+\left(\frac{z}{6}+\frac{\sqrt{z(1-z)}}{2\sqrt{3}}\right)(\;\ket{0}\bra{1}+\ket{1}\bra{0}\;)\\
&+\frac{z}{6}(\;\ket{1}\bra{2}+\ket{2}\bra{1}+
\ket{0}\bra{2}+ \ket{2}\bra{0}\;).\\
\end{aligned}
\end{equation*}
Now let us consider a qutrit phase damping (dephasing) channel $\Lambda_{\text{deph}}$ whose action on the input state $\rho$ gives
\begin{equation}
\label{eq:dephasing channel}
    \Lambda_{\text{deph}}(\rho)=(1-p)\rho+p\sum_{i=0}^{2}\ket{i}\bra{i}\rho\ket{i}\bra{i}
\end{equation}
\begin{figure}[!htbp] % use * to span both columns
    \hspace{-10cm}
    \begin{minipage}{0.48\textwidth}
    \vspace{1cm}
       \includegraphics[width=\linewidth]{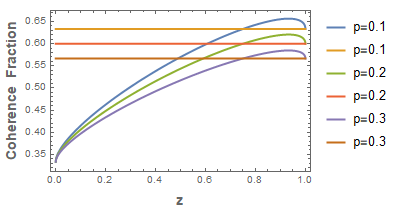}
       \vspace{-0.5cm}
         \caption{The horizontal lines represents $\frac{2-p}{3}$ and the curves represents $\frac{1}{3}+\frac{(1-p)z}{3}+\frac{(1-p)\sqrt{z(1-z)}}{3\sqrt{3}}$ for respective values of $p$.}
         \label{fig:1st fig}
         %\vspace{0.5cm}
           \includegraphics[width=\linewidth]{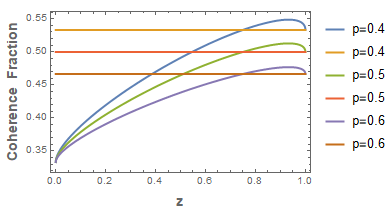}
           \vspace{-0.5cm}
      \caption{The horizontal lines represents $\frac{2-p}{3}$ and the curves represents $\frac{1}{3}+\frac{(1-p)z}{3}+\frac{(1-p)\sqrt{z(1-z)}}{3\sqrt{3}}$ for respective values of $p$.}
      \label{fig:2nd fig}
      % \vspace{1cm}
        \includegraphics[width=\linewidth]{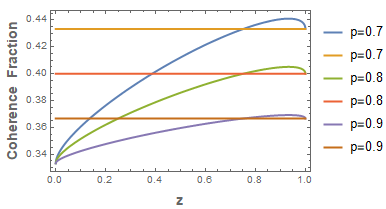}
        \vspace{-0.5cm}
        \caption{The horizontal lines represents $\frac{2-p}{3}$ and the curves represents $\frac{1}{3}+\frac{(1-p)z}{3}+\frac{(1-p)\sqrt{z(1-z)}}{3\sqrt{3}}$ for respective values of $p$.}
         \label{fig:3rd fig}
    \end{minipage}
   \hspace{-10cm} 
\end{figure}	
The Kraus operators of the dephasing channel $\Lambda_{\text{deph}}$ are $K_0=\sqrt{1-p}I_3,\; K_1=\sqrt{p}\ket{0}\bra{0},\; K_2=\sqrt{p}\ket{1}\bra{1},\; K_3=\sqrt{p}\ket{2}\bra{2}.$
Action of the dephasing channel $\Lambda_{\text{deph}}$ on $\ket{\psi}\bra{\psi}_{A_1}$ gives the following output state
\begin{multline}
     \label{eq:1st output state}
    \Lambda_{\text{deph}}(\ket{\psi}\bra{\psi}_{A_1})=\frac{1}{2}(\ket{0}\bra{0}+\ket{1}\bra{1})\\+\frac{1-p}{2}(\ket{0}\bra{1}+\ket{1}\bra{0}).
\end{multline}
Action of the dephasing channel $\Lambda_{\text{deph}}$ on $\rho_{A_1}'$ give the following output state
\small{
\begin{equation}
    \label{eq:2nd output state}
    \begin{aligned}
\Lambda_{\text{deph}}(\rho'_{A_1})&=\frac{2z}{3}\ket{0}\bra{0}+\frac{3-2z}{6}(\; \ket{1}\bra{1}+\ket{2}\bra{2}\;)\\&+\left(\frac{z}{6}+\frac{\sqrt{z(1-z)}}{2\sqrt{3}}\right)(1-p)(\;\ket{0}\bra{1}+\ket{1}\bra{0}\;)\\
&+\frac{z(1-p)}{6}(\;\ket{1}\bra{2}+\ket{2}\bra{1}+
\ket{0}\bra{2}+ \ket{2}\bra{0}\;).\\
\end{aligned}
\end{equation}
}
Coherence fraction of the output states \eqref{eq:1st output state} and \eqref{eq:2nd output state} are respectively 
\begin{equation}
    \label{eq:1st output C.F.}
    C_F(\Lambda_{\text{deph}}(\ket{\psi}\bra{\psi}_{A_1}))=\frac{2-p}{3},
\end{equation}
\begin{equation}
    \label{eq:2nd output C.F.}
    C_F(\Lambda_{\text{deph}}(\rho'_{A_1}))=\frac{1}{3}+\frac{(1-p)z}{3}+\frac{(1-p)\sqrt{z(1-z)}}{3\sqrt{3}}.
\end{equation}
We can see that if $p=1$ then the above channel is reduced to the completely dephasing channel $(\Delta)$, and from equation \eqref{eq:1st output C.F.} and \eqref{eq:2nd output C.F.}  $ C_F(\Lambda_{\text{deph}}(\ket{\psi}\bra{\psi}_{A_1}))=C_F(\Lambda_{\text{deph}}(\rho'_{A_1}))=\frac{1}{3},$ and $C_F(\ket{\psi}\bra{\psi}_{A_1})=\frac{2}{3}.$ But for the other cases of $p\in(0,1)$, we can see from figure \ref{fig:1st fig}, \ref{fig:2nd fig} and \ref{fig:3rd fig} a clear improvement of coherence fraction by using the processed state $\rho_{A_1}'$ instead of $\ket{\psi}\bra{\psi}_{A_1}$.

\section{SIO in Schur-Multiplier Form}
\label{sec:SIO}
In this section, we have started with a necessary and sufficient condition for an incoherent state-preserving CPTP map to be a special kind of SIO. After that, we have also provided numerical evidence to show that a particular SIO map, which is not of the Schur multiplier form, does not satisfy the condition of the theorem.
 \begin{theorem}
 \label{th:Dipayan's 3rd theorem}
     For the fixed incoherent basis $\left\{ \ket{i} \right\}_{i=1}^{d}$, let $\sigma=\sum_{i=1}^{d} p_i \ket{i} \bra{i}$ be a non-degenerate incoherent state. Let $\mathcal{E}$ be a CPTP map with $\mathcal{E(I)} \subseteq \mathcal{I}$ then the condition $\mathcal{E}(\rho)\mathcal{E}(\sigma)=\mathcal{E}(\rho \sigma)$ for any state $\rho$, is the necessary and sufficient condition for the CPTP map to be a SIO of the form $$\mathcal{E}(\rho)=U(A\odot \rho)U^{\dagger},$$ where $A\succeq0$ with $A_{ii}=1$ for all $i$ and $U$ is an incoherent unitary operation.
   \end{theorem}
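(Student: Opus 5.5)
The plan is to prove the two directions separately. Sufficiency follows quickly from Theorem \ref{th:Dipayan's 2nd theorem}. For necessity, I will show that the multiplicativity condition, together with non-degeneracy of $\sigma$, forces every matrix unit $\ket{i}\bra{j}$ to be mapped to a multiple of $\ket{\pi(i)}\bra{\pi(j)}$ for some permutation $\pi$.

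\emph{Sufficiency.} Suppose $\mathcal{E}(\rho)=U(A\odot\rho)U^\dagger$, where $U$ is an incoherent unitary, i.e.\ a permutation times a diagonal phase matrix. Since $A_{ii}=1$ and $\sigma$ is diagonal, $A\odot\sigma=\sigma$. Then
$$\mathcal{E}(\rho)\mathcal{E}(\sigma)=U(A\odot\rho)U^\dagger U(A\odot\sigma)U^\dagger=U\big((A\odot\rho)(A\odot\sigma)\big)U^\dagger.$$
By equation \eqref{eq:Dipayn's theorem equation} this equals $U(A\odot(\rho\sigma))U^\dagger=\mathcal{E}(\rho\sigma)$. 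Finally, $\mathcal{E}$ is an SIO because its Kraus operators are $U$ times the diagonal Kraus operators given by Theorem \ref{th:Dipayan's 1st theorem}.

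\emph{Necessity, step 1: linearize.} Since $\mathcal{E}(\mathcal{I})\subseteq\mathcal{I}$, the image $\tau:=\mathcal{E}(\sigma)=\mathrm{diag}(t_1,\dots,t_d)$ is diagonal. States span all of $\mathbb{C}^{d\times d}$, so by linearity $\mathcal{E}(X\sigma)=\mathcal{E}(X)\tau$ for every matrix $X$. Because $\mathcal{E}$ is Hermiticity-preserving, taking adjoints of this identity with $X^\dagger$ in place of $X$ also gives $\mathcal{E}(\sigma X)=\tau\mathcal{E}(X)$.

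\emph{Step 2: constrain the entries.} Set $X=\ket{i}\bra{j}$ and $M=\mathcal{E}(\ket{i}\bra{j})$. The two identities become $p_jM=M\tau$ and $p_iM=\tau M$. Entrywise, this says $(p_j-t_l)M_{kl}=0$ and $(p_i-t_k)M_{kl}=0$ for all $k,l$.

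\emph{Step 3: extract the permutation.} This is the step I expect to be the crux. Take $i=j$. Then $\mathcal{E}(\ket{i}\bra{i})$ is a diagonal state of trace one, so it has some nonzero entry $(k,k)$, and step 2 forces $t_k=p_i$. Define $K_i=\{k:t_k=p_i\}$, so each $K_i$ is nonempty. Non-degeneracy means the $p_i$ are pairwise distinct, so the sets $K_1,\dots,K_d$ are pairwise disjoint. We have $d$ nonempty disjoint subsets of a $d$-element index set, hence each is a singleton $K_i=\{\pi(i)\}$ for a permutation $\pi$. Returning to general $i,j$, step 2 now forces $M_{kl}=0$ unless $k=\pi(i)$ and $l=\pi(j)$. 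Therefore $\mathcal{E}(\ket{i}\bra{j})=a_{ij}\ket{\pi(i)}\bra{\pi(j)}$ for some scalars $a_{ij}$, and trace preservation gives $a_{ii}=1$. Writing $A=[a_{ij}]$ and letting $P_\pi$ be the permutation unitary, this means $\mathcal{E}(\rho)=P_\pi(A\odot\rho)P_\pi^\dagger$.

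\emph{Step 4: positivity of $A$.} The Choi matrix of $\mathcal{E}$ is $\sum_{ij}a_{ij}\ket{i}\bra{j}\otimes\ket{\pi(i)}\bra{\pi(j)}$. It is supported on the span of the vectors $\ket{i}\ket{\pi(i)}$, and restricted to that subspace it is exactly $A$. Complete positivity of $\mathcal{E}$ is equivalent to the Choi matrix being positive semidefinite, hence to $A\succeq0$. This gives the claimed form with $U=P_\pi$.
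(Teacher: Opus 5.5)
Your proof is correct, and for the necessity direction it takes a different and more economical route than the paper.

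\textbf{The paper's route.} It first writes every diagonal $D$ as a Lagrange-interpolation polynomial $f(\sigma)$. It extends the identity by induction to $\mathcal{E}(\rho)\mathcal{E}(\sigma^n)=\mathcal{E}(\rho\sigma^n)$, then to $\mathcal{E}(\rho)\mathcal{E}(D)=\mathcal{E}(\rho D)$ and its adjoint $\mathcal{E}(D)\mathcal{E}(\rho)=\mathcal{E}(D\rho)$. From this it deduces that the $\mathcal{E}(\ket{i}\bra{i})$ are pairwise orthogonal diagonal idempotents, hence equal to $\ket{\pi(i)}\bra{\pi(i)}$. Only then does it use the two-sided relations with these projectors to force $\mathcal{E}(\ket{i}\bra{j})\propto\ket{\pi(i)}\bra{\pi(j)}$.

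\textbf{Your route.} You never leave the single identity for $\sigma$. Reading $p_jM=M\tau$ and $p_iM=\tau M$ entrywise gives eigenvalue-matching constraints. The pigeonhole count on the sets $K_i$ then yields the permutation directly, and shows along the way that $\mathcal{E}(\sigma)=P_\pi\sigma P_\pi^\dagger$.

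\textbf{What your route buys.}
\begin{itemize}
\item No unitality is needed. When the paper passes from powers $\sigma^n$ to $f(\sigma)$, it disposes of the constant term via $\mathcal{E}(\rho)\mathcal{E}(I)=\mathcal{E}(\rho)$. This tacitly uses unitality, which is not a hypothesis and only holds a posteriori. It can be repaired using the extra assumption $p_k>0$ that the paper makes inside its proof, by interpolating with zero constant term.
\item No positivity of the $p_i$ is needed. Your argument needs neither unitality nor $p_k>0$, so it also covers a non-degenerate $\sigma$ with one zero eigenvalue.
\item You make two steps explicit that the paper leaves implicit: the linear extension from states to all matrices, and the Choi-matrix justification of $A\succeq 0$.
\end{itemize}

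\textbf{What the paper's route offers.} It gives a stronger, more structural intermediate statement: $\mathcal{E}$ is multiplicative against the whole diagonal algebra $\mathbb{D}$ on both sides. This explains conceptually why a single non-degenerate $\sigma$ suffices, namely that it generates $\mathbb{D}$.
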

   \begin{proof}
   Let the condition $\mathcal{E}(\rho)\mathcal{E}(\sigma)=\mathcal{E}(\rho \sigma)$ holds. \\\\
       Let $\mathbb{D}$ denote the set of diagonal matrices in the incoherent basis and
       \begin{equation}
       \label{eq:diagonal notation}
           \mathbb{D}=\left\{ \sum_i x_{i} \ket{i} \bra{i}: x_i \in \mathbb{R} \right\} \\
       \end{equation}
       Because the diagonal entries $p_i$ are distinct, $\sigma$ has non degenerate spectrum on $\mathbb{D}$.\\\\
       %By Lagrange Interpolation, for each diagonal projector $P_i=\ket{i}\bra{i}$ there exist a polynomial $f_i$ such that 
      % \begin{equation}
         %  P_i=f_i(\sigma)
       %\end{equation}
       Let
       \begin{equation}
       \label{eq: Incoherent state notation}
        \sigma=\sum_{i=1}^{d} p_i \ket{i} \bra{i}
       \end{equation}
       with $p_k \neq p_{l}$ for $k \neq l$ and $p_k >0$ \\\\
  We Know that for any polynomial $f(x)$,
  $$f(\sigma)=\sum_{i=1}^d f(p_i) \ket{i}\bra{i}$$
       %Now $\sigma=\sum_{k=1}^{d} p_k \ket{k} \bra{k}$\\\\
       %This gives $\sigma^2=\sum_{k=1}^{d} p_k^2 \ket{k} \bra{k}$
      % and $\sigma^n=\sum_{k=1}^{d} p_k^n \ket{k} \bra{k}$\\\\
       %Now 
       %\begin{align*}
          % f(\sigma)&=a_0 I+ a_1 \sigma+ a_2 \sigma^2+....+a_{m-1} \sigma^{m-1}\\
          % &=\sum_k (a_0+a_1 p_k + a_2 p_k^2+....+a_{m-1} p_k^{m-1}) \ket{k}\bra{k}\\
          % &=\sum_k f(p_k) \ket{k} \bra{k}
      % \end{align*}
       Let us define a polynomial $f(t)$ such that 
       \begin{equation*}
           f(t)=\sum_{i=1}^d x_i f_i(t)
       \end{equation*}
         %  $$f_i(x)=\frac{(x-p_{1})(x-p_{2})...(x-p_{i-1})(x-p_{i+1})...(x-p_{d})}{(p_i-p_1)(p_i-p_2)...(p_i-p_{i-1})(p_i-p_{i+1})...(p_i-p_d)}$$
          where, 
           $$f_i(t)=\prod_{k=1, k \neq i}^d \frac{(t-p_k)}{(p_i-p_k)}$$
       This gives for $t=p_i,\text{ } f_i(p_i)=1$ and for $t=p_j$ with $ j \neq i $ $f_i(p_j)=0$. 
       %\begin{align*}
          % f_i(\sigma)&=\sum_{k=1}^d f_i(p_k) \ket{k} \bra{k}\\
           % &= \ket{i} \bra{i}\\
          % &= P_i
       %\end{align*}
      This clearly gives $f(p_i)=x_i$
       %\begin{equation}
          % P_i=f_i (\sigma) \tag{24b}
      % \end{equation}
      Thus 
      $$f(\sigma)=\sum_i f(p_i) \ket{i} \bra{i}=\sum_i x_i \ket{i} \bra{i}$$
       Therefore for every $D \in \mathbb{D}$ is a polynomial in $\sigma$ i.e.
       \begin{equation}
       \label{eq:25}
           D=f(\sigma) 
       \end{equation}
       We have assumed $\mathcal{E}(\rho)\mathcal{E}(\sigma)=\mathcal{E}(\rho \sigma)$.\\\\
      % We replace $\rho$ by $\rho \sigma^{n}$ and  we get for $n \geq 1$
      % Next let us prove that for any $n \geq 1$
       %\begin{equation}
       %    \mathcal{E}(\rho) \mathcal{E}(\sigma^n)=\mathcal{E}(\rho \sigma^n)
       %\end{equation}
      % Now for $n=1$, $\mathcal{E}(\rho) \mathcal{E}(\sigma)=\mathcal{E}(\rho \sigma)$ which is already given.\\\\
      % Let for $n=m$, the condition $\mathcal{E}(\rho)\mathcal{E}(\sigma^m)=\mathcal{E}(\rho \sigma^m)$ holds.
      % Now
      % \begin{align}
      % \mathcal{E}(\rho \sigma^m) \mathcal{E}(\sigma) &=\mathcal{E}(\rho \sigma^m \sigma) \tag{26a}\\
      % &=\mathcal{E}(\rho \sigma^{m+1}) \tag{26b}
      % \end{align}
      % Again
       %\begin{equation}
         %  \mathcal{E}(\rho \sigma^m)=\mathcal{E}(\rho) \mathcal{E}(\sigma^m) \tag{26c}
      % \end{equation}
      % From $(26b)$ and $(26c)$ we have
      % \begin{align*}
         %  \left( \mathcal{E}(\rho) \mathcal{E}(\sigma^m) \right) \mathcal{E}(\sigma)&=\mathcal{E}(\rho \sigma^{m+1})\\
         %  \textbf{or} \;\;\mathcal{E}(\rho) \left( \mathcal{E}(\sigma^m)  \mathcal{E}(\sigma)\;\; \right)&=\mathcal{E}(\rho \sigma^{m+1})\\
         %  \textbf{or}\;\; \mathcal{E}(\rho) \mathcal{E}(\sigma^{m+1})&=\mathcal{E}(\rho \sigma^{m+1})
       %\end{align*}
        By principle of mathematical induction we have $\mathcal{E}(\rho) \mathcal{E}(\sigma^n)=\mathcal{E}(\rho \sigma^n)$ holds for all $n \geq 1$.
       Now 
       \begin{align*}
       \mathcal{E}(\rho) \mathcal{E}(f(\sigma))&=\mathcal{E}(\rho) \mathcal{E}(a_0 I+\sum_{n=1}^{d-1} a_n \sigma^n) \\
       &=\mathcal{E}(\rho) (a_0 \mathcal{E}(I)+\sum_{n=1}^{d-1} a_n \mathcal{E}( \sigma^n)) \\
       &=a_0 \mathcal{E}(\rho) \mathcal{E}(I)+\sum_{n=1}^{d-1} a_n \mathcal{E}(\rho) \mathcal{E}(\sigma^n)\\
       &=a_0 \mathcal{E}(\rho)+\sum_{n=1}^{d-1} a_n \mathcal{E}(\rho) \mathcal{E}(\sigma^n)\\
       &=a_0 \mathcal{E}(\rho)+\sum_{n=1}^{d-1}a_n \mathcal{E}(\rho \sigma^n)\\
       &=\mathcal{E}(a_0 \rho + \sum_{n=1}^{d-1}a_n \rho \sigma^n )\\
       &=\mathcal{E}(\rho(a_0 I+ \sum_{n=1}^{d-1} a_n \sigma^n))\\
       &=\mathcal{E}(\rho f(\sigma))
       \end{align*}
       Thus we  have 
       \begin{equation}
       \label{eq:27}
           \mathcal{E}(\rho) \mathcal{E}(f(\sigma))=\mathcal{E}(\rho f(\sigma))
       \end{equation}
       
       From \eqref{eq:25} and \eqref{eq:27} we can write 
       \begin{equation}
       \label{eq:28}
           \mathcal{E}(\rho)\mathcal{E}(D)=\mathcal{E}(\rho D)
       \end{equation}

       We have assumed $\mathcal{E}(\rho)\mathcal{E}(\sigma)=\mathcal{E}(\rho \sigma)$, taking conjugate transpose we get,
       \begin{align*}
           (\mathcal{E}(\rho) \mathcal{E}(\sigma))^\dagger=(\mathcal{E}(\rho \sigma))^\dagger
       \end{align*}   
       The left-hand side of the above expression equals to 
          $$(\mathcal{E}(\rho) \mathcal{E}(\sigma))^\dagger=(\mathcal{E}(\sigma))^\dagger (\mathcal{E}(\rho))^\dagger$$
           Since $\mathcal{E}$ is a CPTP map, it preserves adjoints, so we have
          $$ \mathcal{E}(X^\dagger)=(\mathcal{E}(X))^\dagger$$
          This will imply $$(\mathcal{E}(\rho) \mathcal{E}(\sigma))^\dagger=\mathcal{E}(\sigma^\dagger)\mathcal{E}(\rho^\dagger)=\mathcal{E}(\sigma)\mathcal{E}(\rho)$$
          Now, Right Hand Side of the expression gives 
          $$(\mathcal{E}(\rho \sigma))^\dagger=\mathcal{E}((\rho \sigma)^\dagger )=\mathcal{E}(\sigma^\dagger \rho^\dagger)=\mathcal{E}(\sigma \rho)$$
          Thus we have 
\begin{equation}
    \mathcal{E}(\sigma)\mathcal{E}(\rho)=\mathcal{E}(\sigma \rho)
\end{equation}
The same arguments will imply 
\begin{equation}
\label{eq:31}
    \mathcal{E}(D) \mathcal{E}(\rho)=\mathcal{E}(D \rho)\;\;\;\text{for all $\rho$, for all $D \in \mathbb{D}$}
\end{equation}
Let $P_i=\ket{i}\bra{i}$. Taking $\rho=P_i$ and $D=P_j$ and substituting in \eqref{eq:28} we get $\mathcal{E}(P_i)\mathcal{E}(P_j)=\mathcal{E}(P_i P_j)$.\\\\
Now $\mathcal{E}(P_i P_j)=\delta_{ij}\mathcal{E}(\ket{i}\bra{j})$\\\\
This will imply $\mathcal{E}(P_i)\mathcal{E}(P_j)=0$ for $i \neq j$ and $\mathcal{E}(P_i)\mathcal{E}(P_i)=\mathcal{E}(P_i)$ for $i=j$.\\\\
So, $\mathcal{E}(P_i)$ are pairwise orthogonal idempotents. Since $\mathcal{E}(\mathcal{I}) \subseteq \mathcal{I}$ and each $P_i$ is incoherent, therefore for each $i$ there exists a permutation $\pi$ such that $\mathcal{E}(P_i)=P_{\pi(i)}$. \\

Let $E_{ij}=\ket{i}\bra{j}$. Now we use \eqref{eq:28} with $D=P_k$
\begin{equation}
\mathcal{E}(E_{ij})\mathcal{E}(P_k)=\mathcal{E}(E_{ij}P_k)=\delta_{jk}\mathcal{E}(E_{ik})
\end{equation}

Similarly, We use \eqref{eq:31} with $D=P_k$
\begin{equation}
\mathcal{E}(P_k)\mathcal{E}(E_{ij})=\mathcal{E}(P_kE_{ij})=\delta_{ki}\mathcal{E}(E_{kj})
\end{equation}

Thus the only possibility is that 
\begin{equation}
    \mathcal{E}(E_{ij})=c_{ij}\ket{\pi(i)}\bra{\pi(j)}
\end{equation}
for some scalars $c_{ij} \in \mathbb{C}$. It is obvious that $c_{ii}=1$ because $\mathcal{E}(P_i)=\ket{\pi(i)}\bra{\pi(j)}$\\\\
A CPTP map whose action on matrix units has the form $(37)$ is precisely a Schur multiplier channel in that basis
\begin{equation}
    \mathcal{E}(\rho)=U (M \odot \rho) U^\dagger
\end{equation}
Where $U\ket{i}=\ket{\pi(i)}$ and $M=[c_{ij}]$ is a positive semidefinite matrix with $c_{ii}=1$. Clearly, this unitary $U$ is an incoherent unitary, since it is just a permutation of the incoherent basis. Hence $\mathcal{E}$ is an SIO.\\
%Such maps admit Kraus operators of the form $K_{\alpha}=U D_{\alpha}$ with each $D_\alpha$ diagonal in the incoherent basis. Thus $D_\alpha$ has at most one non-zero entry in each column and each row. Further Unitary $U$ that maps basis projectors into basis projectors  must be an incoherent unitary.\\\\

Conversely, let $\mathcal{E}(\rho)=U(A\odot \rho)U^{\dagger},$ where $U$ is incoherent unitary and $A\succeq 0$ with $A_{ii}=1$ for all $i$. Now, for any state $\rho$ and a non-degenerate incoherent state $\sigma$ we have,
\begin{align*}
    \mathcal{E}(\rho)\mathcal{E}(\sigma)&=U(A\odot \rho)U^{\dagger}U(A\odot \rho)U^{\dagger}\\
    &=U(A\odot \rho)I(A\odot \sigma)U^{\dagger}\\
    &=U(A\odot \rho)(A\odot \sigma)U^{\dagger}\\
    &=U(A\odot (\rho\sigma))U^{\dagger} \quad [\text{Using equation \eqref{eq:Dipayn's theorem equation}}]\\
    &=\mathcal{E}(\rho\sigma).
\end{align*}
Hence, $\mathcal{E}(\rho)\mathcal{E}(\sigma)=\mathcal{E}(\rho\sigma).$ 
\end{proof}

Now, we perform a numerical study of the identity $$\mathcal{E}(\rho)\mathcal{E}(\sigma)=\mathcal{E}(\rho \sigma),$$
for a non-degenerate incoherent state $\sigma$, using a class of strictly incoherent operations $\mathcal{E}$ that are not of the form 
\begin{equation}
\label{eq 51}
    \mathcal{E}(\rho)= U ( C \odot \rho) U^\dagger
\end{equation}
We consider the qubit incoherent state 
$\sigma=\begin{pmatrix}
    0.3 & 0 \\
    0 & 0.7
\end{pmatrix}$,
which is non-degenerate. Let us define $\mathcal{E}$ as
$$\mathcal{E}(\rho)=\sum_{n=1}^2 K_n \rho K_n^{\dagger},$$ 
with Kraus operators 
$K_1=\begin{pmatrix}
    1 & 0 \\
    0 & \sqrt{0.6}
\end{pmatrix},\;K_2=\begin{pmatrix}
0 & \sqrt{0.4}\\
0 & 0
\end{pmatrix}$. These operators satisfy the completeness condition $\sum_n K_n^\dagger K_n=I$. From our construction, it is clear that $\mathcal{E}$ is SIO but not in the form of \ref{eq 51}.

Now we generate an ensemble of random density matrices $$\rho_k=\frac{G_k G_k^\dagger}{Tr(G_k G_k^{\dagger})}$$
where $G_k$ is a complex matrix with independent Gaussian entries. \\
For each sample, we compute the Frobenius norm deviation
$$\epsilon_k=|| \mathcal{E}(\rho_k)\mathcal{E}(\sigma)-\mathcal{E}(\rho_k \sigma)||_F.$$
The results are presented as a discrete scatter plot in figure \ref{fig:violation}, where each point corresponds to one independently sampled state $\rho_k$. The vertical axis is plotted on a logarithmic scale. Here the deviations are consistently several orders of magnitude larger (typically $10^{-2}-10^{-1}$), with noticeable variation across samples.\\\\
These observations demonstrate that the identity in equation $\mathcal{E}(\rho)\mathcal{E}(\sigma)=\mathcal{E}(\rho \sigma)$ is not satisfied in general, even when $\sigma$ is non-degenerate and $\mathcal{E}$ is strictly incoherent. The violation arises from the non-Schur nature of the channel. This numerical evidence supports the conclusion that the multiplicative condition characterizes a strictly smaller class of operations, closely related to Schur multiplier channels, rather than the full set of strictly incoherent operations.
\begin{figure}[ht]
\centering
\includegraphics[width=0.45\textwidth]{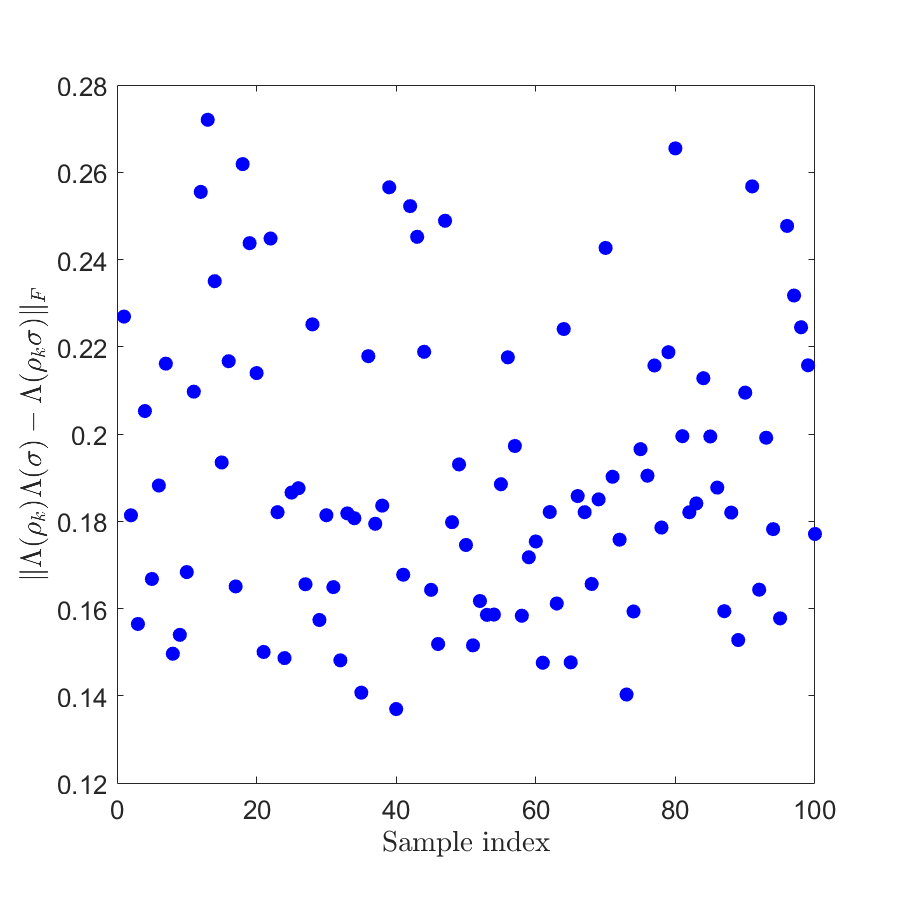}
\caption{Frobenius-norm deviation 
$\|\mathcal{E}(\rho_k)\mathcal{E}(\sigma)-\mathcal{E}(\rho_k\sigma)\|_F$ 
for randomly generated density matrices $\rho_k$. 
The incoherent state $\sigma=\mathrm{diag}(0.3,0.7)$ is non-degenerate. 
The SIO $\mathcal{E}$ is not of Schur-multiplier form. 
The nonzero deviations confirm violation of the multiplicative relation.}
\label{fig:violation}
\end{figure}
\section{Conclusion}
\label{sec:conclusion}
In this work, we investigated whether the use of catalysis can increase the coherence fraction of quantum states after passing through a quantum channel $\Lambda$. Specifically, we have used a catalysis $\tau_c$ to transfer the original input state $\rho_s$ to another state $\rho_s'$, without employing the catalysis $\tau_c$, which is undoubtedly impossible from a single copy of the state $\rho_s$. When $\rho_s'$ passes through the quantum channel $\Lambda$, we proved that the resulting state's coherence fraction $C_F(\Lambda(\rho_s'))$ can exceed $C_F(\Lambda(\rho_s))$. We have shown that this enhancement is possible under two conditions. First, when the quantum channel $\Lambda$ and the IO $\mathcal{E}$ satisfy the condition $C_F(\Lambda(\mathcal{E}(\rho_s)))= C_F(\mathcal{E}(\Lambda(\rho_s)))$ (see theorem \ref{th:main 2nd theorem}). Second, even if this condition is not fulfilled, the enhancement is still possible if $\Lambda$ is an incoherent channel (see theorem \ref{th:1st main theorem}). Here, we have also shown a practical application of our setup in a phase discrimination task. If the input state is somehow subject to decoherence or noise $\Lambda$, we use a correlated catalysis to prepare a processed state $\rho_s'$, and using $\Lambda(\rho_s')$ instead of $\Lambda(\rho_s)$ can lead us to better performance in phase discrimination. Most of the work on catalysis focuses on direct state transformations, but catalysis for channel downstream tasks like phase discrimination is unexplored. In our previous work \cite{char2024boosting}, we already proved correlated catalysts can improve phase discrimination when the input state is not affected by noise. This result extends that to noise-affected inputs for phase discrimination, which is more realistic for noisy quantum devices. In this work, we have provided an example of a quantum channel, viz., quantum addition channel, that commutes with a particular type of SIO (a subclass of IO), and hence the conclusion of theorem \ref{th:main 2nd theorem} is valid for this channel. We also provide another example illustrating theorem \ref{th:1st main theorem}, that for a qutrit phase-damping channel (an incoherent channel), correlated catalysis can help protect or even increase the coherence fraction. Consequently, this also leads to an example of improved performance in phase discrimination tasks when the input state is affected by phase-damping noise. In this work, we have also derived a necessary and sufficient condition for a CPTP map $\mathcal{E}$ with property $\mathcal{E(I)}\subseteq \mathcal{I}$, to be of a particular type of SIO. It turns out that multiplicativity condition $\mathcal{E}(\rho\sigma)=\mathcal{E}(\rho)\mathcal{E}(\sigma)$, where $\sigma$ is a non-degenerate incoherent state and $\rho$ is any state, is the necessary and sufficient condition for $\mathcal{E}$ to be an SIO in the form $\mathcal{E}(X)=U(A \odot X)U^{\dagger}$, with $A \succeq 0$ and $A_{ii}=1$ for all $i$ and U is a incoherent unitary operation. Numerical simulations (see figure \ref{fig:violation}) using 100 randomly generated states from the Ginibre ensemble further confirm that this multiplicativity condition fails in general when $\mathcal{E}$ does not belong to this specific class of SIO. Our results highlight the operational significance of catalysis in noise-affected states within coherence resource theory, and we hope this work will open new directions for improving quantum information processing tasks in noisy environments within other resource theories. 
\section*{ACKNOWLEDGMENT}
We thank Dr. Swapan Rana, Dr. Atanu Bhunia, and Indranil Biswas for the fruitful discussion related to this work. Priyabrata Char acknowledges the financial support of the Institute Postdoctoral Fellowship, IIT Guwahati, India.

\bibliography{main}
\bibliographystyle{apsrev4-1}

\end{document}